\definecolor{aliceblue}{rgb}{0.94, 0.97, 1.0}
\begin{document}

\title{
An homogenization approach for the inverse spectral problem of periodic Schr\"odinger operators
%\thanks{Grants or other notes
%about the article that should go on the front page should be
%placed here. General acknowledgments should be placed at the end of the article.}
}
%\subtitle{Do you have a subtitle?\\ If so, write it here}

\titlerunning{An homogenization approach for the inverse spectral problem}        % if too long for running head

\author{
        Lorenzo Zanelli %etc.
}

%\authorrunning{Short form of author list} % if too long for running head

\institute{
           L. Zanelli \at
           Department of Mathematics ``Tullio Levi-Civita'', University of Padova\\
            \email{lzanelli@math.unipd.it}
}

\date{Received: date / Accepted: date}
% The correct dates will be entered by the editor

\maketitle

\begin{abstract}
We study the inverse spectral problem for periodic Schr\"odinger opera\-tors of kind $- \frac{1}{2} \hbar^2 \Delta_x  +   V(x)$ on the flat torus $\Bbb T^n :=  (\Bbb R / 2 \pi \Bbb Z)^n$  with potentials $V \in C^{\infty} (\Bbb T^n)$. We show that if two operators are isospectral for any $0 < \hbar \le 1$ then they have the same effective Hamiltonian given by the  periodic homogenization of  Hamilton-Jacobi equation. This result provides a necessary condition for the isospectrality of these  Schr\"odinger operators. We also provide a link between our result and    the spectral limit of quantum integrable systems.

\keywords{Inverse spectral problem  \and Homogenization \and Hamilton-Jacobi}
% \PACS{PACS code1 \and PACS code2 \and more}
% \subclass{MSC code1 \and MSC code2 \and more}
\end{abstract}

\section{Introduction}
\label{intro}
Let $\Bbb T^n :=  (\Bbb R / 2 \pi \Bbb Z)^n$, $V \in C^\infty (\Bbb T^n)$ and $H (x,p) := \frac{1}{2} |p|^2 +  V(x)$.  The related class of Schr\"odinger operators   is given by 
\begin{equation}
\label{Sch1}
{\rm Op}_\hbar^w  (H)  = - \frac{1}{2}   \hbar^2 \Delta_x  +   V(x). 
\end{equation} 
This operator is selfadjoint on $W^{2,2} (\Bbb T^n)$ and exhibits discrete spectrum which is bounded from below  (see for example \cite{R-S}).
The semiclassical inverse spectral problem is the study of the family of those $H$ such that ${\rm Spec}({\rm Op}_\hbar  (H) )$ is the same for all $0 < \hbar \le 1$, namely with the same eigenvalues and the same multiplicity.\\
%Notice that here we do not fix $\hbar =1$ and whence this problem is much more restrictive.\\
\indent The target of our paper is to discuss the link between this semiclas\-sical inverse spectral problem for ${\rm Op}^w_\hbar  (H)$ and the periodic homogenization of the Hamilton-Jacobi equation of  $H$ (see for example \cite{E2}, \cite{E4}, \cite{E5}, \cite{LHY}, \cite{Ri}).

 %Furthermore, we want to analyze such a link with respect to the invariance under  Hamiltonian diffeomorphisms acting on the phase space $\Bbb T^n \times \Bbb R^n$, namely  the time one flows of Hamiltonian vector fields.
The first observation we provide is about the Egorov Theorem for the propagation of quantum observables (see  \cite{B-G-P}, \cite{B-R} and the references therein). For classical observables $A :  \Bbb R^{2n}_z \rightarrow \Bbb R$ given by  symbols  associated to  Weyl operators ${\rm Op}^w_\hbar  (A)$ and for symbols $B :  \Bbb R^{2n}_z \rightarrow \Bbb R$, $|\partial_z^\gamma B(z)| \le C_\gamma$ $\forall \, |\gamma| \ge 2$   with Hamiltonian flow  $\varphi^t : \Bbb R^{2n}  \to \Bbb R^{2n}$, the Egorov Theorem  gives
%such that $|\partial_z^\gamma A(z)| \le C_\gamma$ $\forall \, |\gamma| \ge 1$ and $|\partial_z^\gamma K(z)| \le B_\gamma$  $\forall \, |\gamma| \ge 2$. The  
%classical dynamics given by the  of  the Hamiltonian $K$ is related to the quantum dynamics . 
\begin{equation}
\label{Ego}
U_\hbar^\star (t) \circ   {\rm Op}_\hbar^w  (A)  \circ  U_\hbar  (t)  =   {\rm Op}_\hbar^w  (A_\varphi)
\end{equation} 
by the quantum dynamics $U_\hbar (t) := \exp{ (- i \, {\rm Op}_\hbar^w (B) t / \hbar)}$ and where 
\begin{equation}
\label{Ego-asymp}
A_\varphi \sim A \circ \varphi^t + \mathcal{O}(\hbar)
\end{equation}
is the semiclassical asymptotics providing the leading term in the space of symbols on $\Bbb R^{2n}_z$.
In Section \ref{SEC2} we recover (\ref{Ego}) and (\ref{Ego-asymp}) in the setting of the Weyl quantization on the flat torus.\\ 
%In particular, here we apply this result for $t=1$ and  $A (z) := \mathcal{X}(H (z))$ where $\mathcal{X} \in C^\infty_c (\Bbb R;\Bbb R)$ fulfills $\mathcal{X} = 1$ on a bounded interval.\\ 
%In particular, we are not interested in the form of the remainder $\mathcal{O}(\hbar^2)$, since we are looking at the link between the Schr\"odinger  spectrum and the Hamilton-Jacobi homogenization  in the semiclassical asymptotics as $\hbar \to 0^+$.\\ 
 \indent
 The second observation we underline is a  picture of classical mechanics.  It is well known that,  without integrability or KAM assumptions on $H$, the related Hamilton-Jacobi equation  has not global and smooth solutions.  However, we can consider (see Section \ref{SEC4}) the  viscosity solutions $S(P,\cdot \,) : \Bbb T^n \to \Bbb R$ of
\begin{equation}
\label{HJ00}
H (x,P+\nabla_x S(P,x)) = \overline{H} (P),  \quad P \in \Bbb R^n.
\end{equation}
%Moreover  (see \cite{Ri}) any such solution $S$ is locally semiconcave and it is $C^{1,1}_{{\rm loc}}$ on the open dense set $\Bbb T^n \backslash \overline{\Sigma (S)}$ where $\Sigma (S)$ is the singular set. 
The function $\overline{H}$, called effective Hamiltonian, is one of the main outcomes  of weak KAM theory and homogenization  of Hamilton-Jacobi equation. It  
 is a convex function and can be represented or approximated in various ways (see for example \cite{E2},  \cite{E4},  \cite{E5}). 
%\begin{equation}
%\overline{H} (P)  =  \inf_{v \in C^\infty (\Bbb T^n)} \sup_{x \in \Bbb T^n} H(x,P+ \nabla_x v(x)).
%\end{equation}
Here we are interested in a symplectic invariance property of $\overline{H}$.   As shown in \cite{B-S},  for  all the time one Hamiltonian flows  $\varphi \equiv \varphi^1 : \Bbb T^n \times \Bbb R^n \to \Bbb T^n \times \Bbb R^n$ with $C^1$ - regularity, i.e.  Hamiltonian diffeomorphisms,   we have the invariance 
\begin{equation}
\label{inv-sym}
\overline{H \circ \varphi }   =  \overline{H} .
\end{equation}
Together with this property, we can to look at the  inverse homogenization problem. This is given by looking at the family of those $H \in C^\infty (\Bbb T^n  \times \Bbb R^n)$ such that the effective Hamiltonian $\overline{H}$ is the same. Such a  problem has been recently studied  in \cite{LHY}, where a characterization in the  case $n=1$ is given and various results for $n\ge1$ are provided.\\
In the Section 4 of \cite{LHY} it is shown a result on the link between this inverse homogenization problem and the spectrum of the Hill operator $- \frac{1}{2} \frac{d^2}{dx^2}  +   V(x)$, namely when $n=1$ and $\hbar = 1$. In this setting, it is proved that the isospectrality implies the same effective Hamiltonian. Our paper aims to provide a result on this link in higher dimensions, but instead of  setting $\hbar = 1$ we will assume the stronger isospectrality for all $0 < \hbar \le 1$. 

In view of the above two observations, we will make use of (\ref{Ego}), (\ref{Ego-asymp})  and (\ref{inv-sym}) in order to prove the two main results of the paper. More precisely, we notice that the isospectrality of two Schr\"odinger operators of kind (\ref{Sch1}), which exhibit descrete spectrum, implies their conjugation by a unitary operator $U_\hbar : L^2 (\Bbb T^n) \to L^2 (\Bbb T^n)$, 
\begin{equation}
\label{Ego6}
U_\hbar^\star  \circ   {\rm Op}_\hbar^w  (H_1)  \circ  U_\hbar    =   {\rm Op}_\hbar^w  (H_2)   . 
\end{equation}
In the framework of our paper, we can prove (see Lemma AB) that this unitary operator takes the form  $U_\hbar = \exp{ (- i \, {\rm Op}_\hbar^w (b (\hbar)) / \hbar)} + \mathcal{O}_{L^2 \to L^2} (\hbar^\infty)$. 
The properties of the symbol $b(\hbar, \cdot \,): \Bbb T^n_x \times \Bbb R^n_p \to \Bbb C$ and its time-one Hamiltonian flow $\varphi_\hbar : \Bbb T^n_x \times \Bbb R^n_p \to \Bbb T^n_x \times \Bbb R^n_p$ do not allow a global  (on $\Bbb T^n \times \Bbb R^n$) semiclassical asymptotics of $H_2 = H_1 \circ \varphi_\hbar +  \mathcal{O}(\hbar)$ in the space of symbols as shown in (\ref{Ego-asymp}). Nevertheless, it is possible a local $C^0$ - semiclassical asymptotics and  this is the content of the first result of the paper   
\begin{theorem}
\label{TH1}
Let $V_\alpha \in C^\infty (\Bbb T^n)$ and $H_\alpha (x,p) := \frac{1}{2} |p|^2 +  V_\alpha(x)$ with $\alpha = 1,2$.   We assume that 
\begin{equation}
\label{spec12}
{\rm Spec}(  {\rm Op}_\hbar^w (H_1) )  =  {\rm Spec}(  {\rm Op}_\hbar^w (H_2)  )  \quad  \quad \forall \, 0 < \hbar \le 1.
\end{equation}
Let $E > \inf_{x \in \Bbb T^n} V_2 (x)$ and  
$
\Omega (E) :=   \{ (x,p) \in \Bbb T^n \times \Bbb R^n \ | \  H_2 (x,p) < E    \}.
$
Then, for $0 < \varepsilon \le 1$ there exist $\sigma_0 (E,\varepsilon) \to 0^+$ as $\varepsilon \to 0^+$  and a family of Hamiltonian diffeomorphisms $\varphi_{\sigma} : \Bbb T^n \times \Bbb R^n \to \Bbb T^n \times \Bbb R^n$ such that 
\begin{equation}
\label{A12}
%\sup_{(x,p) \in \Omega_2 (E) }  
\|  H_1 \circ \varphi_{\sigma}   -  H_2   \|_{C^0 (\Omega(E)) }  \le   \varepsilon,  \quad  \quad \forall \, 0 < \sigma \le \sigma_0  (E,\varepsilon).
\end{equation}  
\end{theorem}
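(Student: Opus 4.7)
The strategy has three stages: exploit the isospectrality (\ref{spec12}) to obtain a unitary intertwiner, represent it via Lemma AB as a quantum propagator with smooth (although $\hbar$-dependent) generator $b(\hbar)$, and then apply a phase-space localized version of the Egorov theorem (\ref{Ego})--(\ref{Ego-asymp}) to transfer the conjugation from the operator side to the symplectic side on $\Omega(E)$.

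First, since each ${\rm Op}_\hbar^w(H_\alpha)$ is selfadjoint on $W^{2,2}(\Bbb T^n)$ with discrete spectrum bounded below, the equality of spectra with multiplicities yields a unitary $U_\hbar$ on $L^2(\Bbb T^n)$ realizing (\ref{Ego6}). By the cited Lemma AB we may write
$$
U_\hbar = \exp\bigl(-i\,{\rm Op}_\hbar^w(b(\hbar))/\hbar\bigr) + \mathcal{O}_{L^2\to L^2}(\hbar^\infty),
$$
and, with the identification $\sigma := \hbar$, we let $\varphi_\sigma$ be the time-one Hamiltonian flow on $\Bbb T^n \times \Bbb R^n$ generated by $b(\hbar)$.

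Next, fix a small $\delta > 0$ and a cutoff $\chi \in C_c^\infty(\Bbb T^n \times \Bbb R^n)$ with $\chi \equiv 1$ on a neighborhood of $\overline{\Omega(E)}$ and support contained in $\Omega(E+\delta)$. On this compact region the restricted symbols $\chi H_1$ and $b(\hbar)|_{{\rm supp}\,\chi}$ enjoy $C^k$ bounds uniform in $\hbar$, and a finite-time propagation estimate confines the flow of $b(\hbar)$ over $t\in[0,1]$ to a slightly larger compact set. The Egorov expansion (\ref{Ego-asymp}) then applies to $\chi H_1$ and yields
$$
U_\hbar^\star \, {\rm Op}_\hbar^w(\chi H_1)\, U_\hbar \;=\; {\rm Op}_\hbar^w\bigl((\chi H_1)\circ \varphi_\sigma\bigr) + R_\sigma, \qquad \|R_\sigma\|_{L^2\to L^2} \le C_1(E)\,\sigma.
$$
Subtracting ${\rm Op}_\hbar^w(\chi H_2)$ and invoking (\ref{Ego6}) on $\chi H_1$ gives ${\rm Op}_\hbar^w\bigl(\chi H_2 - (\chi H_1)\circ\varphi_\sigma\bigr) = \mathcal{O}_{L^2\to L^2}(\sigma)$. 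Testing this operator against semiclassical coherent states $\Psi_{x_0,p_0}$ at arbitrary $(x_0,p_0)\in \Omega(E)$ (where $\chi\equiv 1$), and using the standard asymptotic $\langle \Psi_{x_0,p_0},\, {\rm Op}_\hbar^w(a)\,\Psi_{x_0,p_0}\rangle = a(x_0,p_0) + \mathcal{O}(\hbar)$ for smooth symbols $a$, upgrades the operator bound to the pointwise one
$$
\|H_1\circ \varphi_\sigma - H_2\|_{C^0(\Omega(E))} \le C_2(E)\,\sigma,
$$
and taking $\sigma_0(E,\varepsilon) := \varepsilon/C_2(E)$ yields (\ref{A12}).

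The principal obstacle is carrying out this localization cleanly. Lemma AB only guarantees that $b(\hbar)$ is a symbol, and its derivatives need not be bounded uniformly in $\hbar$ on all of $\Bbb T^n\times \Bbb R^n$, which is exactly the obstruction flagged in the text to a global semiclassical expansion $H_2 = H_1\circ\varphi_\hbar + \mathcal{O}(\hbar)$. The cutoff $\chi$, the finite-time confinement of the $b(\hbar)$-flow, and the commutator expansions entering Egorov must therefore be coordinated so that every step takes place on a compact region where $\hbar$-uniform control of $b(\hbar)$ is available; only after this localization does the passage from operator norm $\mathcal{O}(\sigma)$ to the pointwise $C^0$ bound on $\Omega(E)$ become legitimate.
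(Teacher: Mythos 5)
Your high-level skeleton is the right one (isospectrality $\Rightarrow$ unitary intertwiner $\Rightarrow$ exponential representation $\Rightarrow$ Egorov on symbols $\Rightarrow$ $C^0$ bound on a sublevel set), but the proposal glosses over precisely the two steps that constitute most of the paper's proof, and in one of them the workaround you suggest does not actually work.

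First, ``Lemma AB'' is not an external black box you can invoke: there is no separately stated lemma producing the generator $b(\hbar)$ in the required form. The construction of $b(\hbar)$ is the bulk of the paper's proof of Theorem~\ref{TH1}. It goes through Lemma~\ref{Lemma-red} (truncating eigenfunctions to Fourier modes $|k|^2\le g(\hbar)$ up to an $e^{-1/(4\hbar)}$ error), Proposition~\ref{Prop10} (reducing the spectral projector $\Pi_\hbar$ to a finite matrix $\mathcal{P}_\hbar$ of size growing like $g(\hbar)$), a finite-dimensional conjugation $\mathcal{U}_{\hbar,0}=e^{-i\mathcal{A}_\hbar}$ with $\hbar$-continuous entries, and then Lemma~\ref{Lemma-Weyl} to realize $B_\hbar:=\hbar A_\hbar$ as a Weyl operator whose symbol is a sum of Wigner transforms times a compactly supported cutoff $\mathcal{X}_{\le E+\epsilon}$. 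Your proof skips all of this and so never actually produces the object it manipulates.

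Second, and more seriously, your key assertion that ``on this compact region the restricted symbols $\chi H_1$ and $b(\hbar)\big|_{{\rm supp}\,\chi}$ enjoy $C^k$ bounds uniform in $\hbar$'' is unjustified and, given the construction, almost certainly false. The symbol $b(\hbar)$ is built from $N(\hbar,a,E)\sim\hbar^{-n}$ Wigner transforms of $\hbar$-dependent eigenfunctions; its $C^k$ seminorms on any fixed compact set have no reason to stay bounded as $\hbar\to 0^+$, and the paper explicitly flags exactly this obstruction (the lack of a semiclassical asymptotics $b(\hbar,z)\sim\sum_j\hbar^j b_j(z)$). A cutoff in phase space does not cure $\hbar$-dependence of the symbol. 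This is why the paper does \emph{not} set $\sigma:=\hbar$ as you do. Instead it keeps them distinct: it freezes the generator at a fixed value $\sigma$, so that $b(\sigma)$ is a \emph{fixed} symbol with fixed (possibly $\sigma$-dependent) constants $\mathcal{K}[H_1,b(\sigma)]$, uses continuity in $\hbar$ (the limit (\ref{Lim-sigma})) to control the discrepancy between conjugating by $e^{-i{\rm Op}_\hbar^w(b(\hbar))/\hbar}$ and by $e^{-i{\rm Op}_\hbar^w(b(\sigma))/\hbar}$ as $\hbar\to\sigma^-$, and only then applies the $C^0$ Egorov bound of Corollary~\ref{Cor-aap} to the frozen symbol $b(\sigma)$, getting $\|H_1\circ\varphi_\sigma - H_2\|_{C^0(\Omega(E))}\le\mathcal{K}[H_1,b(\sigma)]\,\hbar$ and tuning $\hbar$ small relative to $\mathcal{K}[H_1,b(\sigma)]^{-1}$. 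Without this freezing device your chain of estimates has a genuine hole. Finally, a minor but real difference: the paper obtains the $C^0$ bound directly at the symbol level from Corollary~\ref{Cor-aap}, not by testing against coherent states; the coherent-state route would again require $\hbar$-uniform control of derivatives of the symbol $\chi H_2-(\chi H_1)\circ\varphi_\sigma$, which runs into the same uniformity problem.
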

This local asymptotics does not guarantee  the link of $H_1$ and $H_2$ by a unique Hamiltonian diffeomorphism on $\Omega (E)$ without a remainder. 
%We also stress that  (\ref{A12}) can be rewritten as $\|  H_1 \circ \varphi_{\sigma}   -  H_2   \|_{C^0 (\Omega(E)) }  \le \sigma_0^{-1}  (E,\sigma)$ for $0 < \sigma \le \sigma_0  (E,1)$, and that  $\sigma_0^{-1}  (E,\cdot \,)$ does not necessarily admits power series expantion.\\ 
Recalling Thm. 1 of \cite{Cr}, if  the unitary conjugation in the lefthand side of (\ref{Ego6}) is an order preserving algebra isomorphism of semiclassical Pseudodifferential operators filtered by powers of $\hbar$  then  there exists  a symplectic diffeomorphism $\psi$ on $\Bbb T^n \times \Bbb R^n$ such that $H_2  =  H_1 \circ \psi$. Unluckily, such a condition is not easy to check in our case since the symbol $b(\hbar, z)$ with $z=(x,p)$ does not necessarily admits the semiclassical asymptotics $b(\hbar, z ) \sim \sum_{j \ge 0} \hbar^j b_j ( z)$.\\
\indent The inequality $(\ref{A12})$ together with (\ref{inv-sym}) will be used to recover the equivalence of  effective Hamiltonians,  which is  the second result of the paper.

%We underline that (\ref{A12}) does not necessarily imply the existence of a limiting time one Hamiltonian flow $\overline{\varphi} : \Bbb T^n \times \Bbb R^n \to \Bbb T^n \times \Bbb R^n$ such that $A_1 \circ  \overline{\varphi}  = A_2$. We also remark that the connection of $H_1$ and $H_2$ by $\varphi_{\varepsilon}$ has  a local nature, namely we do not know whether or not there is a time one Hamiltonian flow $\psi_{\varepsilon}$  such that  $\|  H_1 \circ  \psi_{\varepsilon}   -  H_2 \|_{C^0(\Bbb T^n \times \Bbb R^n)} \to 0$ as $\varepsilon \to 0^+$.\\ 

%The subsequent step is to provide the link between the inverse problems of spectral and homogenization kind,
\begin{theorem}
\label{TH2}
 The isospectrality 
\begin{equation}
\label{spec123}
{\rm Spec}(  {\rm Op}_\hbar^w (H_1) )  =  {\rm Spec}(  {\rm Op}_\hbar^w (H_2)  )  \quad \quad  \forall \, 0 < \hbar \le 1
\end{equation} 
implies
\begin{equation}
\label{Heff12}
\overline{H}_1  = \overline{H}_2 .
\end{equation}
\end{theorem}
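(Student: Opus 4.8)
\emph{Plan of proof.} The plan is to fix $P\in\Bbb R^n$ and deduce $\overline{H}_1(P)=\overline{H}_2(P)$ by combining the $C^0$--approximation of Theorem \ref{TH1}, the symplectic invariance (\ref{inv-sym}), and the elementary fact that, for a mechanical Hamiltonian, the value $\overline{H}(P)$ is already read off on a sublevel set depending on $P$ alone. Concretely, I would start from the inf--max representation of the effective Hamiltonian (see \cite{E2},\cite{E4},\cite{E5}),
\[
\overline{H}(P)=\inf_{\phi\in C^1(\Bbb T^n)}\ \max_{x\in\Bbb T^n}H\bigl(x,P+\nabla_x\phi(x)\bigr),
\]
whose infimum, tested with $\phi\equiv0$, already gives $\overline{H}_2(P)\le\tfrac12|P|^2+\max_{x\in\Bbb T^n}V_2(x)=:c_2(P)$. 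I then fix, once and for all and independently of the accuracy parameter below, the energy level $E:=c_2(P)+2$; since $E>\inf_{x\in\Bbb T^n}V_2(x)$, Theorem \ref{TH1} is available for this $E$.

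The geometric heart of the argument is a confinement lemma for test graphs. Given $\phi\in C^1(\Bbb T^n)$, the set $\Gamma_\phi:=\{(x,P+\nabla_x\phi(x)):x\in\Bbb T^n\}$ is compact and connected, and it contains the point $(x_0,P)$ where $x_0$ is a maximum point of $\phi$ on $\Bbb T^n$ (so $\nabla_x\phi(x_0)=0$); since $H_2(x_0,P)\le c_2(P)<E$, this point lies in $\Omega(E)$. Hence, by connectedness, for every $\phi$ one has either $\Gamma_\phi\subseteq\Omega(E)$, or else $\Gamma_\phi$ meets $\partial\Omega(E)\subseteq\{H_2=E\}$.

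Now take any $0<\varepsilon\le1$, choose $\sigma$ with $0<\sigma\le\sigma_0(E,\varepsilon)$ and the Hamiltonian diffeomorphism $\varphi_\sigma$ provided by Theorem \ref{TH1}, and set $G_\sigma:=H_1\circ\varphi_\sigma$; since $\Omega(E)$ is bounded and all functions involved are continuous, (\ref{A12}) upgrades to $\|G_\sigma-H_2\|_{C^0(\overline{\Omega(E)})}\le\varepsilon$. I claim $|\overline{G}_\sigma(P)-\overline{H}_2(P)|\le\varepsilon$. For the lower bound: for an arbitrary $\phi$, if $\Gamma_\phi\subseteq\Omega(E)$ then $\max_xG_\sigma(x,P+\nabla_x\phi)\ge\max_xH_2(x,P+\nabla_x\phi)-\varepsilon\ge\overline{H}_2(P)-\varepsilon$; otherwise $\Gamma_\phi$ contains a point with $H_2=E$, hence with $G_\sigma\ge E-\varepsilon\ge c_2(P)+1\ge\overline{H}_2(P)-\varepsilon$; in both cases $\max_xG_\sigma(x,P+\nabla_x\phi)\ge\overline{H}_2(P)-\varepsilon$, and taking the infimum over $\phi$ yields $\overline{G}_\sigma(P)\ge\overline{H}_2(P)-\varepsilon$. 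For the upper bound: pick a near-minimizer $\phi_\delta$ for $\overline{H}_2(P)$, so that $\max_xH_2(x,P+\nabla_x\phi_\delta)\le\overline{H}_2(P)+\delta$ with $\delta$ small; then $\Gamma_{\phi_\delta}\subseteq\{H_2<E\}=\Omega(E)$, so $G_\sigma\le H_2+\varepsilon$ along $\Gamma_{\phi_\delta}$ and $\overline{G}_\sigma(P)\le\max_xG_\sigma(x,P+\nabla_x\phi_\delta)\le\overline{H}_2(P)+\delta+\varepsilon$; letting $\delta\to0^+$ gives $\overline{G}_\sigma(P)\le\overline{H}_2(P)+\varepsilon$. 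Finally, by the symplectic invariance (\ref{inv-sym}) applied to the Hamiltonian diffeomorphism $\varphi_\sigma$, $\overline{G}_\sigma=\overline{H_1\circ\varphi_\sigma}=\overline{H}_1$, whence $|\overline{H}_1(P)-\overline{H}_2(P)|\le\varepsilon$; as $E$ did not depend on $\varepsilon$, letting $\varepsilon\to0^+$ gives $\overline{H}_1(P)=\overline{H}_2(P)$, and $P$ was arbitrary.

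The step I expect to be the main obstacle is the use of the inf--max representation (and hence of the localization to $\Omega(E)$) for the transformed Hamiltonian $G_\sigma=H_1\circ\varphi_\sigma$, which is only $C^1$ and need not be convex in $p$ nor of mechanical form: one has to exploit that the diffeomorphisms $\varphi_\sigma$ built in the proof of Theorem \ref{TH1} are close to the identity, so that $G_\sigma$ remains coercive (and convex in $p$ on the relevant region) and the weak--KAM/homogenization apparatus — including the invariance (\ref{inv-sym}) taken from \cite{B-S} — genuinely applies to it. A secondary point requiring care is that (\ref{A12}) controls $G_\sigma$ only on $\Omega(E)$ and says nothing outside; it is precisely the connectedness argument of the confinement lemma that keeps all competing test graphs inside $\Omega(E)$, so that the purely local estimate (\ref{A12}) is enough to pin down $\overline{G}_\sigma(P)$.
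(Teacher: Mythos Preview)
Your argument is correct and follows the paper's strategy: combine the $C^0$-estimate of Theorem~\ref{TH1} on $\Omega(E)$, the invariance $\overline{H_1\circ\varphi_\sigma}=\overline{H}_1$ of \cite{B-S}, and a localization of the inf--sup formula to graphs contained in a sublevel set. The only difference is in how the localization is carried out: the paper simply invokes the restricted identity (\ref{inf-sup34}) (for $P\in\mathcal U_E$ the infimum in (\ref{inf-sup}) may be taken over $\Gamma\subset\{H_2\le E\}$) and then replaces $H_2$ by $H_1\circ\varphi_\sigma$ under that inf--sup, whereas you prove an equivalent confinement dichotomy by hand---every graph $\Gamma_\phi$ contains the point $(x_0,P)\in\Omega(E)$, hence by connectedness either stays inside $\overline{\Omega(E)}$ or crosses $\{H_2=E\}$, and in the second case the sup is already $\ge E-\varepsilon>\overline{H}_2(P)$. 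Your version has the advantage of making the $\varepsilon$-bookkeeping explicit (the paper writes equalities where strictly one has $\varepsilon$-inequalities before letting $\varepsilon\to0$). As to your stated obstacle, it is not one: the invariance (\ref{inv-sym}) and the geometric inf--sup of \cite{B-S}, \cite{B-S2} are formulated so that $\overline{H_1\circ\varphi_\sigma}=\overline{H}_1$ holds with $H_1$ Tonelli and $\varphi_\sigma$ a Hamiltonian diffeomorphism, without any need for $G_\sigma$ itself to be Tonelli or for $\varphi_\sigma$ to be close to the identity.
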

The above result tell us that equality (\ref{Heff12}) is a necessary condition for the isospectrality  (\ref{spec123}). Conversely, the condition  (\ref{spec123}) is a sufficient condition to have the same homogenization  given by (\ref{Heff12}).
%As remarked in Section \ref{SEC4}, we have that $\overline{H}(0) = \max V$. Thus, our Theorem \ref{TH2} tell us also that isospectrality of  ${\rm Op}_\hbar^w (H_1)$ and ${\rm Op}_\hbar^w (H_2)$ implies 
%\begin{equation}
%\max V_1 = \max V_2.  
%\end{equation}
%As we show in Proposition B, this equality can be proved in an alternative and more direct way, namely without passing through effective Hamiltonians.
%We also remind that the classical limit $\hbar \to 0 ^+$ of the lowest eigenvalue of these two operators provides
%\begin{equation}
%\min V_1 = \min V_2.  
%\end{equation}
%We stress that such an equality cannot be recovered from the effective Hamiltonian, since $\overline{H}(P) \ge \max V$ for all $P \in \Bbb R^n$ (see Section \ref{SEC4}).\\
%In  the  assumption that $H$ is in the class of Liouville integrable Hamiltonians, the Hamilton-Jacobi equation (\ref{HJ00}) has smooth solutions and $S(P,x)$ realizes the generating function of the canonical map $\Phi : \Bbb T^n \times \Bbb R^n \to  \Bbb T^n \times \Bbb R^n$ providing the new smooth Hamiltonian in the Action variables $\overline{H}(P)  = H \circ \Phi (X,P)$.
\indent In  the  assumption of a quantum integrable system, we have $1 \le i \le n$  commuting semiclassical pseudodifferential operators ${\rm Op}_\hbar (f_i)$. The principal symbols are commuting with respect to the Liouville bracket and  hence we have a (classical) completely integrable Hamiltonian system with momentum map $F (x,p) = (f_1, \, ...,  f_n) (x,p)$. As shown in Thm. 2 of  \cite{PPN}, if the symbols are bounded then  the convex hull of the joint spectra linked to ${\rm Op}_\hbar (f_i)$  recovers, in the classical limit as $\hbar \to  0^+$, the convex hull of the image of the momentum map $F$.  Such a result solves to the isospectrality problem for quantum toric systems with bounded symbols. \\
In our paper, we focus the attention on the effective Hamiltonian  $\overline{H} (P)$ which can be regarded as the generalization, beyond integrability assumptions, of the smooth Hamiltonian depending only from the Action variables $P$ provided by  the Liouville-Arnold Theorem. In the case that $S(P,x)$ solving (\ref{HJ00}) is a smooth and global generating function of a canonical map, then the inversion of $p= P + \nabla_x S (P,x)$ provides the momentum map $F(x,p) := P(x,p)$. %The generalization of the momentum map beyond the integrable case cannot be (in our knowledge) easily done, and this is the reason why we approach the problem of isospectrality by the equivalence of effective Hamiltonian.\\ 
In this case the image of the momentum map, on the sublevel set for energy $E$,  equals
\begin{equation}
\label{IF}
F ( \{ (x,p) \in \Bbb T^n \times \Bbb R^n \, :  \,  H(x,p) \le E  \}  )  = \{  P \in \Bbb R^n \,  :  \,    \overline{H} (P)  \le E  \}.
\end{equation}
In view of our Theorem \ref{TH2} and equality (\ref{IF}), we recover  (see Section \ref{SEC5})  the same kind of result shown \cite{PPN}, here  in the setting of integrable  Hamiltonian with mechanical form  $\frac{1}{2} |p|^2 + V(x)$ on $\Bbb T^n :=  (\Bbb R / 2 \pi \Bbb Z)^n$.\\
\indent The one dimensional version of Theorem \ref{TH2} is shown in \cite{PZ2}. In particular, it is shown an exact version of the Egorov Theorem for the class of one dimensional periodic Schr\"odinger operators $- \frac{1}{2} \hbar^2 \frac{d^2}{dx^2}  +   V(x)$. We also stress that in this case the effective Hamiltonian is given by the inversion of the map
\begin{equation}
E \mapsto \mathcal{J} (E) := \frac{1}{2\pi} \int_0^{2\pi} \sqrt{2(E-V(x))} \, dx, \quad E \ge \max V,
\end{equation}
namely $\overline{H}(P) = \mathcal{J}^{-1}(P)$ for $|P|\ge (2\pi)^{-1} \int_0^{2\pi} \sqrt{2(\max V-V(x))} \, dx$ and $\overline{H}(P) = \max V$ for $|P|< (2\pi)^{-1} \int_0^{2\pi} \sqrt{2(\max V-V(x))} \, dx$. 
Moreover it is easily seen that 
\begin{equation}
\mathcal{J} (E) = {\rm Vol} \{ (x,p) \in \Bbb T \times \Bbb R \, :  \,  \frac{1}{2}|p|^2   +   V(x) \le E  \}  
\end{equation}
which is the first spectral invariant of the Weyl Law (see \cite{G-S}) on the number of the eigenvalues $(E_{\hbar ,\ell})_{\ell \in \Bbb N}$ smaller than $E$. Namely $N(\hbar,E) := \sharp \{E_{\hbar ,\ell} \, : \,\,  \min V \le E_{h,\ell} \le E   \}$ has the asymptotics  
\begin{equation}
N(\hbar,E)  =   (2\pi h)^{-1} \Bigl(   {\rm Vol}  \{ (x,p) \in \Bbb T \times \Bbb R \,  : \,  \min V \le   \frac{1}{2}|p|^2   +   V(x)  \le E \} +  \mathcal{O} (\hbar)  \Bigr).
\end{equation}
A  study of the link between the effective Hamiltonian $\overline{H}(P)$, viscosity solutions of Hamilton-Jacobi equation and eigenfunctions (or quasimodes)  in the semiclassical framework should involve  their phase space localization. Some preliminary results, beyond the  one dimensional case, have been obtained in \cite{C-Z}, \cite{PZ1}, \cite{Z1}.\\ 
In order to avoid confusion with respect to  the literature, we underline that the effective Hamiltonian approach provided in \cite{G-M-S} is a completely different tool from the  effective Hamiltonian function of the Hamilton-Jacobi  homogeni\-zation used in the present paper.  

%In the assumption that $S(P,x)$ solving (\ref{HJ00}) is a smooth and global generating function of a canonical map $\Phi : \Bbb T^n \times \Bbb R^n \to  \Bbb T^n \times \Bbb R^n$,  then the new smooth Hamiltonian in the Action variables $H \circ \Phi (X,P) = \overline{H}(P)$. 

\section{Pseudodifferential operators on the flat torus}
\label{SEC2}
In this section we recall  the Weyl quantization on the flat torus as discussed in section 2 of \cite{PZ1}, which make use of the toroidal Pseudodifferential operators theory developed in \cite{R-T}. As shown in Prop. 2.3 of \cite{PZ1}, the  Weyl quantization on the flat torus here applied coincides to the one used in \cite{G-P}.\\ 
Consider the  class of ($x$ - periodic)  H\"ormander's symbols $b \in S^m (\mathbb{T}^n \times \mathbb{R}^n)$, $m \in \mathbb{R}$, consisting of those functions  
in $C^\infty (\mathbb{R}^n \times \mathbb{R}^n)$ which are  $2\pi\mathbb{Z}^n$-periodic in $x$ ($2\pi$-periodic
in each variable $x_j$, $1\leq j\leq n$) and such that  $\forall \alpha, \beta \in \mathbb{Z}_+^n$  there exists $C_{\alpha \beta} >0$ such that 
\begin{equation}
\label{S-def78}
| \partial_x^\alpha\partial_\eta^\beta   b (x,\eta) |   \le  C_{\alpha \beta m} \langle \eta \rangle^{m-|\beta|},  \qquad  \forall (x,\eta)\in \mathbb{T}^n\times\mathbb{R}^n
\end{equation}
where $\langle\eta\rangle:=(1+|\eta|^2)^{1/2}$. Together with these symbols, one can introduce the class of toroidal symbols $\tau \in S^m (\mathbb{T}^n \times \mathbb{Z}^n)$. That is,
by using the difference operator $\Delta_{\kappa_j} \tau (x,k):= \tau (x,\kappa+e_j)-  \tau (x,\kappa)$ ($e_j$ being the $j$th vector of the canonical basis of $\mathbb{R}^n$), we can require that $\forall\alpha,\beta\in\mathbb{Z}^n_+$ there exists $C_{\alpha\beta}$ such that 
\begin{equation}
|\partial_x^\alpha\Delta_\kappa^\beta \tau (x,\kappa)| \leq C_{\alpha \beta  m } \langle\kappa\rangle^{m-|\beta|},\qquad \forall (x,\kappa)\in \mathbb{T}^n\times\mathbb{Z}^n. 
\end{equation}
\begin{remark}
\label{tor-eu}
There is a full correspondence between symbols $S^m (\mathbb{T}^n \times \mathbb{R}^n)$ and $S^m (\mathbb{T}^n \times \mathbb{Z}^n)$. Indeed, we address the reader to Remark 4.1 and Thm. 5.2 of \cite{R-T},  and  we stress that we have a toroidal symbol $\tau \in S^m (\mathbb{T}^n \times \mathbb{Z}^n)$ if and only if there exists $b \in S^m (\mathbb{T}^n \times \mathbb{R}^n)$ such that $\tau = b \, |_{\mathbb{T}^n \times \mathbb{Z}^n}$ and such $b$ is unique modulo a term in $S^{-\infty} (\mathbb{T}^n \times \mathbb{R}^n)$. In particular, since we have the equality  
\begin{equation}
\partial_x^\alpha \Delta_\eta^\beta b (x,\eta) =  \partial_x^\alpha  \partial_\eta^\beta   b (x,  \omega) 
\end{equation}
for some $\omega \in Q := [ \eta_1 + \alpha_1, ... ,  \eta_n + \alpha_n]$ then
\begin{eqnarray}
| \,  \partial_x^\alpha \Delta_\eta^\beta b (x,\eta)  \, |  =  | \,  \partial_x^\alpha  \partial_\eta^\beta   b (x, \omega)  \, |  &\le&  C_{\alpha \beta m} \langle \omega \rangle^{m-|\beta|}
\le   C_{\alpha \beta m}^\prime \langle \eta \rangle^{m-|\beta|}
\end{eqnarray}
with new constants $C_{\alpha \beta m}^\prime > 0$.
\end{remark}

For any  $\tau \in S^m (\mathbb{T}^n \times \mathbb{Z}^n)$ we can thus associate a toroidal Pseudodifferential operator as
\begin{equation}
\mathrm{Op} (\tau) \psi(x) :=(2\pi)^{-n}\sum_{\kappa \in \mathbb{Z}^n}\int_{\mathbb{T}^n}e^{i\langle x-y,\kappa \rangle}  \tau ( x,    \kappa  )\psi(y)dy,\,\,\,\,
\psi\in C^\infty(\mathbb{T}^n).
\end{equation}
The composition rule of these operators (see Thm. 4.3 in \cite{R-T}) states that for any $\tau \in S^m (\mathbb{T}^n \times \mathbb{Z}^n)$ and $w \in S^\ell (\mathbb{T}^n \times \mathbb{Z}^n)$ we have 
\begin{equation}
\mathrm{Op} (\tau) \circ \mathrm{Op} (w) =  \mathrm{Op} (\tau \sharp w)
\end{equation}
with $\tau \sharp w \in S^{m+ \ell} (\mathbb{T}^n \times \mathbb{Z}^n)$ and 
\begin{equation}
\label{comp00}
\tau \sharp w  \sim  \sum_{\alpha \ge 0} \frac{1}{\alpha !}  \Delta_\kappa^\alpha \tau (x,\kappa) \, D_x^{(\alpha)} \overline{w (x,\kappa)}. 
\end{equation}
This asymptotics of symbols  (\ref{comp00}) means that the remainder
\begin{equation}
R_N (x,\kappa) :=  \sum_{\alpha > N} \frac{1}{\alpha !}  \Delta_\kappa^\alpha \tau (x,\kappa) \, D_x^{(\alpha)} \overline{w (x,\kappa)}
\end{equation}
belongs to $S^{m+ \ell} (\mathbb{T}^n \times \mathbb{Z}^n)$ and  $\forall N \ge 1$ fulfills
\begin{equation}
\label{est-RR}
|\partial_x^{\gamma} \Delta_\kappa^{\mu} R_N (x,\kappa)| \leq C_{\gamma \mu  m \ell } \, \langle\kappa\rangle^{m + \ell -|\mu|-N},\qquad \forall (x,\kappa)\in \mathbb{T}^n\times\mathbb{Z}^n. 
\end{equation}
Notice that for symbols of kind $\tau=\tau (x, \hbar \kappa)$ we have that  $\Delta_\kappa^\alpha [ \tau (x,\hbar \kappa)] =  \hbar \, \Delta_\kappa^\alpha \tau (x, \hbar \kappa)$ and this provides series (\ref{comp00}) with powers of $\hbar$. Following the computations (see Thm. 4.3 in \cite{R-T}) of the estimate in (\ref{est-RR}),  we recover in this case the constant $\hbar^N C_{\gamma \mu  m \ell }$. \\

The semiclassical framework of toroidal operators deals with the following
\begin{definition}[Weyl quantization]\\
 Given a symbol $b\in S^m(\mathbb{T}^n\times\mathbb{R}^n)$, we define  the Weyl quantization 
\begin{equation}
\mathrm{Op}^w_\hbar (b) \psi(x) :=(2\pi)^{-n}\sum_{\kappa \in \mathbb{Z}^n}\int_{\mathbb{T}^n}e^{i\langle x-y,\kappa \rangle}  b \Big( y,  \frac{\hbar}{2}  \kappa  \Big)\psi(2y-x)dy,\,\,\,\,
\psi\in C^\infty(\mathbb{T}^n).
\label{eqWeylOurs}
\end{equation}
\end{definition}

\begin{remark}
The link between Weyl quantization and standard quantization is given by 
\begin{equation}
\label{W-St}
\mathrm{Op}^w_\hbar (b) \psi =  \mathrm{Op} (\sigma(\hbar)) \psi
\end{equation}
where $\sigma (\hbar,x,\kappa) =  b(x,\hbar \kappa)  + \mathcal{O}(\hbar)$ in $S^m (\mathbb{T}^n \times \mathbb{Z}^n)$. 
To prove (\ref{W-St}), we observe that $T_\omega \psi(y) := \psi(2y-\omega)$ can be written as
\begin{equation}
T_\omega \psi(y) = (2\pi)^{-n}\sum_{\kappa \in\mathbb{Z}^n}\int_{\mathbb{T}^n}e^{i\langle (2y-\omega) -z,\kappa \rangle} \psi(z)dz, \quad \forall \ \psi \in C^\infty (\mathbb{T}^n),
\end{equation}
and hence (thank to Thm. 4.2 in \cite{R-T}) we have ${\rm Op}^w_\hbar (b) \psi (x) =  (\sigma (X,D) \circ T_{\omega = x} \, \psi )(x)$ with $\sigma \sim \sum_{\alpha\geq 0}\frac{1}{\alpha!}\triangle_\eta^\alpha D_y^{(\alpha)} b(y,\hbar \eta / 2)\bigl|_{y=x}$. By applying Theorem 8.4 of \cite{R-T} it is finally recovered (\ref{W-St}). 
\end{remark}

\medskip

Together with the above notion we can introduce the next
\begin{definition}[Wigner transform]\\
The Wigner transform $W_ \hbar \psi (x,\eta)$ for $x \in \Bbb T^n$ and $\eta\in\frac{\hbar}{2}\mathbb{Z}^n$ is  defined as
\begin{equation}
W_\hbar  \psi(x,\eta) = (2\pi)^{-n}\int_{\mathbb{T}^n} e^{2 i\hbar^{-1} \langle z,\eta\rangle}\psi(x-z)\overline{\psi(x+z)}dz,
\label{WignerT}
\end{equation}
and the Wigner distribution is therefore given by
\begin{equation}
\langle {\rm Op}^w_\hbar (b) \psi , \psi\rangle_{L^2}  =  \sum_{ \eta \in \frac{\hbar}{2} \mathbb{Z}^n}\int_{\mathbb{T}^n} b(x,\eta) W_\hbar \psi(x,\eta)dx. 
\label{WignerD}
\end{equation}
\end{definition}

\begin{remark}
\label{Wig-reg}
We notice that for $(x,\kappa) \in \Bbb T^n \times \Bbb Z^n$, 
\begin{equation}
\partial_x^\alpha \Delta_\kappa^\beta ( W_\hbar  \psi (x,\frac{\hbar}{2} \kappa) ) = (2\pi)^{-n}\int_{\mathbb{T}^n} \Delta_\kappa^\beta e^{i \langle z,\kappa\rangle}  \partial_x^\alpha [\psi(x-z)\overline{\psi(x+z)}] dz.
\end{equation}
In particular (see Prop. 3.1 in \cite{R-T}), $\Delta_\kappa^\beta f ( \kappa ) =  \sum_{\gamma \le \beta} (-1)^{|\beta - \gamma|}   \binom{\beta}{\gamma}  f(\kappa + \gamma)$  so  that  we have $\Delta_\kappa^\beta e^{i \langle z,\kappa\rangle}  =   e^{i \langle z,\kappa \rangle}   \sum_{\gamma \le \beta} (-1)^{|\beta - \gamma|}   \binom{\beta}{\gamma}  e^{i \langle z,\gamma \rangle}  =:  e^{i \langle z,\kappa \rangle}  s(z,\kappa;\beta)$. The above terms reads\\ 
\begin{equation}
\label{pegion}
\partial_x^\alpha \Delta_\kappa^\beta ( W_\hbar  \psi (x,\frac{\hbar}{2} \kappa) ) = (2\pi)^{-n}\int_{\mathbb{T}^n} e^{i \langle z,\kappa \rangle}  s(z,\kappa;\beta)  \partial_x^\alpha [\psi(x-z)\overline{\psi(x+z)}] dz.
\end{equation}
Moreover, we recall that the toroidal Fourier Transform which is $F (\varphi) (\kappa) := (2\pi)^{-n} \int_{\mathbb{T}^n}e^{-i\langle z,\kappa \rangle}  \varphi(z)dz$  maps $C^\infty (\Bbb T^n)$  into 
$\mathcal{S} (\Bbb Z^n)$, i.e. the set of rapidly decaying functions from $\Bbb Z^n$  to $\Bbb C$.  Thus, $|F (\varphi) (\kappa)| \le C_N \langle \kappa \rangle^{-N}$ and applying this inequality to  (\ref{pegion})
\begin{equation}
\label{pegion2}
| \partial_x^\alpha \Delta_\kappa^\beta ( W_\hbar  \psi (x,\frac{\hbar}{2} \kappa) ) |  \le C_{N\alpha \beta} \langle \kappa \rangle^{-N}, \quad \forall N \in \Bbb N.
\end{equation}
This means that $(x,\kappa) \mapsto   W_\hbar  \psi (x,\frac{\hbar}{2} \kappa)$ belongs to $S^m(\mathbb{T}^n\times\mathbb{Z}^n)$ for any $m \in \Bbb Z$. 
\end{remark}

\quad

As in the Euclidean Weyl quantization (see \cite{Mar}, \cite{MZ}) here we have that for any  $\varphi_\hbar \in C^\infty (\Bbb T^n)$ with $\| \varphi_\hbar   \|_{L^2} \le 1$ the projection operator $\pi_\hbar \psi :=  \langle \varphi_\hbar,  \psi \rangle_{L^2}\,  \varphi_\hbar$ can be regarded as a Weyl operator whose symbol is the Wigner transform $W_\hbar  \varphi_\hbar$, see Lemma \ref{Lemma-Weyl}:  
\begin{equation}
\pi_\hbar \psi =   \mathrm{Op}^w_\hbar (q(\hbar)) \psi, \quad q (\hbar,x,\eta) :=   W_\hbar  \varphi_\hbar (x,\eta). 
\label{pro-weyl}
\end{equation}
In view of (\ref{eqWeylOurs}), and by using the properties of the toroidal Fourier Transform $F (\psi) (\kappa) := (2\pi)^{-n} \int_{\mathbb{T}^n}e^{-i\langle y,\kappa \rangle}  \psi(y)dy$ together with its inverse $F^{-1} (g) (x) := \sum_{\kappa \in \mathbb{Z}^n}   e^{i \langle x,\kappa \rangle}  g(\kappa)$ (see section 2 of \cite{R-T})  it is easily seen that 
\begin{equation}
- \frac{1}{2} \hbar^2 \Delta_x + V  =  \mathrm{Op}^w_\hbar (H)
\end{equation}
when $H = \frac{1}{2} |\eta|^2  +  V(y)$. For a given pair  $a \in S^\ell (\mathbb{T}^n \times \mathbb{R}^n)$  and    $b \in S^m (\mathbb{T}^n \times \mathbb{R}^n)$   it is shown in Thm. 2.4 of \cite{TP-Z} that, by using the compostion formula of toroidal Pseudodifferential operators, see (\ref{comp00}), we can write
\begin{equation}
{\rm Op}^w_\hbar (a) \circ  {\rm Op}^w_\hbar (b)  =   {\rm Op}^w_\hbar (a \sharp b ),   \qquad a \sharp b     \sim a \cdot b + \mathcal{O}(\hbar) \in  S^{\ell  + m} (\mathbb{T}^n \times \mathbb{R}^n). 
\end{equation}
Moreover, 
\begin{equation}
[ {\rm Op}^w_\hbar (a) , {\rm Op}^w_\hbar (b)   ] =  {\rm Op}^w_\hbar (a \sharp b  -   b \sharp a) 
\end{equation}
where the so-called Moyal bracket reads 
\begin{equation}
\label{moyal}
\{ a, b \}_M := a \sharp b  -   b \sharp a    \sim - i \hbar  \, \{ a , b \} + \mathcal{O}(\hbar^2)
\end{equation} 
with terms in  $S^{\ell  + m-1} (\mathbb{T}^n \times \mathbb{R}^n)$ and where $ \{ a , b \} $ is the usual Liouville bracket. The difference 
\begin{equation}
\label{moyal-D}
 \{ a, b \}_M - ( - i \hbar  \, \{ a , b \})  
\end{equation} 
belongs to $S^{\ell  + m - 2} (\mathbb{T}^n \times \mathbb{R}^n)$ and exhibits a semiclassical asymptotics with leading order  $ \mathcal{O}(\hbar^2)$.\\

\noindent
In the next we recall the toroidal version of Calderon-Vaillancourt Theorem. 
\begin{theorem}[see \cite{G-P}]
\label{TH-CV}
Let  ${\rm Op}^w_\hbar (b)$ as in  (\ref{eqWeylOurs}) with $b \in S^0 (\mathbb{T}^n \times \mathbb{R}^n)$. Let $M = \frac{1}{2} n+1$ when $n$ is even, $M = \frac{1}{2} (n+1) +1$ when $n$ is odd. Then, for any $\psi \in C^\infty (\mathbb{T}^n)$
\begin{equation}
\|  {\rm Op}_\hbar^w (b)  \psi \|_{L^2(\Bbb T^n)} \le \frac{2^{n+1}}{n+2} \ \frac{  \pi^{\frac{3n-1}2{}}}{\Gamma (\frac{n+1}{2} ) }  \sum_{|\alpha| \le 2M}  \,    \|  \partial_x^\alpha b \|_{L^\infty (\Bbb T^n \times \Bbb R^n)}    \ \| \psi \|_{L^2 (\Bbb T^n)}. 
\end{equation}
\end{theorem}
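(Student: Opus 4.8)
\emph{Proof plan.} The natural approach is to diagonalize the underlying torus by passing to Fourier series, thereby realizing $\mathrm{Op}^w_\hbar(b)$ as an explicit infinite matrix on $\ell^2(\mathbb{Z}^n)$ to which the Schur test applies; this route produces a bound of exactly the stated shape, involving only $x$-derivatives of $b$. Writing $\psi=\sum_{\ell\in\mathbb{Z}^n}\widehat\psi(\ell)\,e^{i\langle\,\cdot\,,\ell\rangle}$ (so $\widehat\psi(\ell)=(2\pi)^{-n}\int_{\mathbb{T}^n}e^{-i\langle u,\ell\rangle}\psi(u)\,du$) and denoting by $\widehat b(j,\eta)=(2\pi)^{-n}\int_{\mathbb{T}^n}e^{-i\langle y,j\rangle}b(y,\eta)\,dy$ the Fourier coefficients of $b$ in its first slot, I would substitute the expansion of $\psi(2y-x)$ into the definition (\ref{eqWeylOurs}), interchange the two sums (admissible since $\widehat\psi$ is rapidly decreasing and, $b$ having $y$-derivatives bounded uniformly in $\eta$, each $y$-integral decays rapidly in $\kappa$), carry out the $y$-integration, and reindex by $m=\kappa-\ell$. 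This should produce
\[
\widehat{\mathrm{Op}^w_\hbar(b)\psi}\,(m)=\sum_{\ell\in\mathbb{Z}^n}A_{m\ell}\,\widehat\psi(\ell),\qquad A_{m\ell}:=\widehat b\!\left(m-\ell,\ \tfrac{\hbar}{2}(m+\ell)\right),
\]
so that, by Plancherel on $\mathbb{T}^n$, the $L^2(\mathbb{T}^n)$-operator norm of $\mathrm{Op}^w_\hbar(b)$ coincides with the $\ell^2(\mathbb{Z}^n)$-operator norm of the matrix $A=(A_{m\ell})$. The $\eta$-dependence of $b$ and the parameter $\hbar$ enter only through the second slot of $\widehat b$, which will be absorbed by taking a supremum over $\eta$; this is the structural reason that no $\eta$-derivatives of $b$ and no $\hbar$ can appear in the final estimate.

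To bound the entries I would integrate by parts in $y$: from $(1-\Delta_y)^M e^{-i\langle y,j\rangle}=\langle j\rangle^{2M}e^{-i\langle y,j\rangle}$ one gets $|\widehat b(j,\eta)|\le C_n\,\langle j\rangle^{-2M}\sum_{|\alpha|\le 2M}\|\partial_x^\alpha b\|_{L^\infty(\mathbb{T}^n\times\mathbb{R}^n)}$, where $C_n$ collects the multinomial coefficients occurring in $(1-\Delta_y)^M$. Since $|m-\ell|$ is symmetric in $m$ and $\ell$, both the row sums $\sum_\ell|A_{m\ell}|$ and the column sums $\sum_m|A_{m\ell}|$ are then dominated by $C_n\bigl(\sum_{|\alpha|\le 2M}\|\partial_x^\alpha b\|_{L^\infty}\bigr)\sum_{j\in\mathbb{Z}^n}\langle j\rangle^{-2M}$; the lattice sum is finite precisely when $2M>n$, which is exactly what the stated choice of $M$ guarantees ($2M=n+2$ for $n$ even, $2M=n+3$ for $n$ odd). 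Schur's test now gives $\|A\|_{\ell^2\to\ell^2}\le C_n\bigl(\sum_{|\alpha|\le 2M}\|\partial_x^\alpha b\|_{L^\infty}\bigr)\sum_{j\in\mathbb{Z}^n}\langle j\rangle^{-2M}$, i.e. the asserted inequality; the explicit constant $\tfrac{2^{n+1}}{n+2}\tfrac{\pi^{(3n-1)/2}}{\Gamma((n+1)/2)}$ would be obtained by estimating $\sum_{j\in\mathbb{Z}^n}\langle j\rangle^{-2M}$ through comparison with $\int_{\mathbb{R}^n}\langle\xi\rangle^{-2M}\,d\xi$ (computed in polar coordinates via a Beta integral) together with the elementary bound on $C_n$.

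The analytically delicate point is none of the above — integration by parts, the Schur test, and the lattice/integral comparison are all routine — but rather the passage from (\ref{eqWeylOurs}) to the matrix $A$: one must check that $u=2y-x$ is well defined on $\mathbb{T}^n$ and that $e^{i\langle x-y,\kappa\rangle}$ is $2\pi$-periodic in both $x$ and $y$, justify the rearrangement of the resulting double series, and keep the shift $m=\kappa-\ell$ straight. I expect this bookkeeping to be the main obstacle; once the representation is in place, the remainder is a classical Schur-test argument.
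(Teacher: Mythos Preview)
The paper does not prove this theorem: it is quoted from \cite{G-P} (Graffi--Paul) and stated without argument, so there is no in-paper proof to compare against. Your proposal must therefore be judged on its own.

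As a proof strategy it is sound. Passing to Fourier series, the computation you outline is correct: with $\widehat b(j,\eta)=(2\pi)^{-n}\int_{\mathbb{T}^n}e^{-i\langle y,j\rangle}b(y,\eta)\,dy$ one indeed obtains the matrix $A_{m\ell}=\widehat b\bigl(m-\ell,\tfrac{\hbar}{2}(m+\ell)\bigr)$, the integration-by-parts bound $|\widehat b(j,\eta)|\lesssim\langle j\rangle^{-2M}\sum_{|\alpha|\le 2M}\|\partial_x^\alpha b\|_{L^\infty}$ is standard, and the Schur test closes since $2M>n$ with the stated choice of $M$. The periodicity worry you flag is harmless: because $\kappa,\ell\in\mathbb{Z}^n$, all exponentials $e^{i\langle x,\kappa-\ell\rangle}$ and $e^{i\langle y,2\ell-\kappa\rangle}$ are genuinely $2\pi$-periodic, and the rearrangement is justified by the rapid decay of $\widehat\psi$ together with the uniform-in-$\eta$ bounds on $\partial_y^\alpha b$ that membership in $S^0$ provides.

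The one point where your sketch is optimistic is the explicit constant. Bounding $\sum_{j\in\mathbb{Z}^n}\langle j\rangle^{-2M}$ by the corresponding integral and tracking the multinomial constant in $(1-\Delta_y)^M$ will certainly give \emph{some} dimensional constant times $\sum_{|\alpha|\le 2M}\|\partial_x^\alpha b\|_{L^\infty}$, but there is no reason to expect that this particular route reproduces exactly $\tfrac{2^{n+1}}{n+2}\,\pi^{(3n-1)/2}/\Gamma\!\bigl(\tfrac{n+1}{2}\bigr)$; that value is what Graffi--Paul obtain by their own argument, and recovering it would require following their computation rather than the generic Schur estimate. If you only need the qualitative $L^2$-boundedness (which is all the present paper uses, in Theorem~\ref{TH-Ego}), your argument is complete; if you want the sharp constant as stated, you should consult \cite{G-P}.
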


\section{Egorov Theorem}
\label{SEC3}
We provide the Egorov Theorem in the toroidal setting, namely in the framework of the toroidal Weyl quantization shown in Section \ref{SEC2}. Here we are not interested in the full asymptotics  of the symbols at all orders $\mathcal{O}(\hbar^N)$, since we need only to deal with the leading term involving the Hamiltonian flow.   We mainly follow the same arguments showed in \cite{B-R}, \cite{B-G-P} for the euclidean setting.

\begin{theorem}
\label{TH-Ego}
Let $b \in S^{2} (\mathbb{T}^n \times \mathbb{R}^n)$, $a \in S^{m} (\mathbb{T}^n \times \mathbb{R}^n)$  with  $m \le 0$.  Let $U_\hbar (t) := \exp{ (- i \, {\rm Op}_\hbar^w (b) t / \hbar)}$ and let $\varphi^t : \mathbb{T}^n \times \mathbb{R}^n \to \mathbb{T}^n \times \mathbb{R}^n$ be the Hamiltonian flow of $b$. Then,  
\begin{equation}
U_\hbar^\star (t) \circ   {\rm Op}_\hbar^w  (a)  \circ  U_\hbar  (t)  =   {\rm Op}_\hbar^w  (a_\varphi)
\end{equation} 
where  $a_\varphi \in S^{m} (\mathbb{T}^n \times \mathbb{R}^n) \subset S^{0} (\mathbb{T}^n \times \mathbb{R}^n)$ depends on $\hbar$ and 
\begin{equation}
\|   {\rm Op}_\hbar^w  ( a_\varphi   )  -    {\rm Op}_\hbar^w  (  a \circ \varphi^t  )  \|_{L^2 \to L^2} = \mathcal{O}(\hbar).
\end{equation}
Furthermore,  $a_\varphi  = a \circ \varphi^t   + \mathcal{O}(\hbar)$  in $S^{0} (\mathbb{T}^n \times \mathbb{R}^n)$ uniform in $0 \le t \le 1$, more precisely there exists a family of functionals  $B_{\alpha \beta}[a,b] > 0$  such that for any $0 \le t \le 1$ and  $\forall (x,\eta) \in \Bbb T^n \times \Bbb R^n$ 
\begin{equation}
\label{est-simb80}
 | \partial_x^\alpha  \partial_\eta^\beta (a_\varphi  -   a \circ \varphi^t) (x,\eta)  | \le  B_{\alpha \beta} [a,b] \, \hbar .
\end{equation}

\end{theorem}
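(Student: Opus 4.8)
The plan is to run the Heisenberg-picture argument of \cite{B-R}, \cite{B-G-P} inside the toroidal Weyl calculus of Section~\ref{SEC2}, retaining only the first terms of the symbol hierarchy. Assuming, as in the applications, that $b$ is real-valued (so that $\mathrm{Op}^w_\hbar(b)$ is essentially self-adjoint on $W^{2,2}(\mathbb{T}^n)$), Stone's theorem makes $U_\hbar(t)$ a strongly continuous unitary group; set $A(t):=U_\hbar^\star(t)\,\mathrm{Op}^w_\hbar(a)\,U_\hbar(t)$. Since $a\in S^m\subset S^0$, Theorem~\ref{TH-CV} gives $\|\mathrm{Op}^w_\hbar(a)\|_{L^2\to L^2}\le C$ with $C$ independent of $\hbar$, hence $\|A(t)\|_{L^2\to L^2}\le C$ uniformly in $t$ and $\hbar$. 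Differentiating in $t$ and using the functional calculus for $\mathrm{Op}^w_\hbar(b)$, $A(t)$ solves the Heisenberg equation
\begin{equation}
\label{plan-heis}
\frac{d}{dt}A(t)=\frac{i}{\hbar}\,[\,\mathrm{Op}^w_\hbar(b),A(t)\,]
=U_\hbar^\star(t)\,\mathrm{Op}^w_\hbar\!\Big(\tfrac{i}{\hbar}\,\{b,a\}_M\Big)\,U_\hbar(t),\qquad A(0)=\mathrm{Op}^w_\hbar(a),
\end{equation}
whose leading symbol, by (\ref{moyal}), is $\{b,a\}$.

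Next I would solve the associated transport hierarchy. Under the standing subquadratic framework (as in the Introduction: the derivatives of $b$ of order $\ge2$ are bounded, automatic for $b=\tfrac12|\eta|^2+V(x)$ and its perturbations), the Hamiltonian vector field of $b$ is globally Lipschitz, so $\varphi^t$ is complete and, by Gronwall, each $\partial^\gamma\varphi^t$ is bounded uniformly for $t\in[0,1]$; consequently $c_0(t):=a\circ\varphi^t$ solves $\dot c_0=\{b,c_0\}$, $c_0(0)=a$, and lies in $S^0$ with seminorms bounded uniformly in $t\in[0,1]$. One then looks for $a_\varphi(t,\hbar)=c_0(t)+\hbar\,c_1(t,\hbar)+\dots$ and imposes (\ref{plan-heis}) at the symbol level through the composition formula (\ref{comp00}); using the refined bound (\ref{moyal-D}) (the Moyal remainder $\{b,\cdot\}_M+i\hbar\{b,\cdot\}$ is $\mathcal{O}(\hbar^2)$ in $S^{m}$ with constants uniform in $\hbar$), $c_1$ is forced to solve an inhomogeneous transport equation $\dot c_1=\{b,c_1\}+g(t,\hbar)$, $c_1(0)=0$, with source $g(t,\hbar)\in S^{0}$ built from $c_0$ and the sub-Moyal term. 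Integrating along $\varphi^t$ (again Gronwall) gives $c_1(t,\hbar)\in S^{0}$ with
\begin{equation}
\label{plan-c1}
|\partial_x^\alpha\partial_\eta^\beta c_1(t,\hbar)|\le B_{\alpha\beta}[a,b],\qquad 0\le t\le1,\ \ 0<\hbar\le1 .
\end{equation}
Iterating produces a formal series $\sum_{j\ge0}\hbar^j c_j(t)$ with the $c_j$ in $S^0$ and seminorms uniform in $t\in[0,1]$; a Borel resummation yields a genuine symbol $a_\varphi(t,\hbar)\in S^m\subset S^0$ with $a_\varphi-\sum_{j<N}\hbar^j c_j\in\hbar^N S^0$. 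Recalling $c_0=a\circ\varphi^t$ and using (\ref{plan-c1}), this is exactly $a_\varphi=a\circ\varphi^t+\mathcal{O}(\hbar)$ in $S^0$ with the estimate (\ref{est-simb80}), the functional $B_{\alpha\beta}[a,b]$ being the $\hbar$-independent part of the $(\alpha,\beta)$-seminorm of $c_1$.

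It remains to upgrade the symbol expansion to the operator identity with an $\mathcal{O}(\hbar)$ error. Truncating at order $N$, the composition remainders in (\ref{comp00}) applied to symbols of the form $\tau(x,\hbar\kappa)$ carry the extra powers of $\hbar$ noted after (\ref{est-RR}), so $\mathrm{Op}^w_\hbar\!\big(\sum_{j\le N}\hbar^j c_j(t)\big)$ solves (\ref{plan-heis}) up to a source whose symbol is $\mathcal{O}(\hbar^{N})$ in a fixed class; by Theorem~\ref{TH-CV} only finitely many (uniformly bounded) seminorms of that symbol enter its $L^2\to L^2$ norm. Feeding this into the Duhamel formula for (\ref{plan-heis}) — whose propagator is the unitary conjugation $X\mapsto U_\hbar^\star(\cdot)\,X\,U_\hbar(\cdot)$ — and bounding the integrand again by Theorem~\ref{TH-CV}, one gets $\|A(t)-\mathrm{Op}^w_\hbar(\sum_{j\le N}\hbar^j c_j(t))\|_{L^2\to L^2}\le C_N\hbar^{N}$ for all $N$ and $0\le t\le1$, hence $A(t)=\mathrm{Op}^w_\hbar(a_\varphi)+\mathcal{O}_{L^2\to L^2}(\hbar^\infty)$; that the remaining $\mathcal{O}(\hbar^\infty)$ is itself $\mathrm{Op}^w_\hbar$ of a symbol (so that the clean identity $A(t)=\mathrm{Op}^w_\hbar(a_\varphi)$ holds) and that $a_\varphi\in S^m$ with seminorms uniform in $t$ and $\hbar$ both follow from the stability of the toroidal $\Psi$DO class of order $0$ under conjugation by $U_\hbar(t)$. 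The $N=1$ case then gives $\|\mathrm{Op}^w_\hbar(a_\varphi)-\mathrm{Op}^w_\hbar(a\circ\varphi^t)\|_{L^2\to L^2}=\mathcal{O}(\hbar)$. The step I expect to be the real obstacle is exactly this \emph{uniform} control, in both $t\in[0,1]$ and $\hbar\in(0,1]$, of the symbols $c_j(t,\hbar)$ and of the composition/Moyal remainders in the toroidal calculus, where the $\sharp$-product is governed by the difference operators $\Delta_\kappa$ of (\ref{comp00}) rather than by $\partial_\eta$: it forces one to combine the discrete estimates (\ref{est-RR}) with the Gronwall bounds on $\varphi^t$, the latter themselves resting on the subquadratic growth of $b$. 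Everything else — self-adjointness, the Heisenberg equation, and the Duhamel estimate — is routine once Theorem~\ref{TH-CV} is available.
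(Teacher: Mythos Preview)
Your proposal is correct and follows the same Heisenberg/Duhamel strategy of \cite{B-R}, \cite{B-G-P} that the paper invokes; the ingredients (Moyal remainder in $S^{m}$, Calderon--Vaillancourt, Gronwall bounds on $\varphi^t$) are identical. The difference is one of economy. The paper deliberately stops at the leading term: it writes the single Duhamel identity
\[
U_\hbar^\star(t)\,\mathrm{Op}_\hbar^w(a)\,U_\hbar(t)-\mathrm{Op}_\hbar^w(a\circ\varphi^t)
=\int_0^t U_\hbar^\star(t-s)\,\mathrm{Op}_\hbar^w(\phi(s))\,U_\hbar(t-s)\,ds,
\]
with $\phi(s)=\tfrac{i}{\hbar}\{b,a\circ\varphi^s\}_M-\{b,a\circ\varphi^s\}=\mathcal{O}(\hbar)$ in $S^{m}$ by (\ref{moyal-D}), and reads off the $L^2\to L^2$ bound directly via Theorem~\ref{TH-CV}; the symbol estimate (\ref{est-simb80}) is then quoted from \cite{B-R}. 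You instead construct the whole transport hierarchy $c_0,c_1,\dots$, Borel-resum, and run Duhamel at every order $N$. That buys the full asymptotic expansion $a_\varphi\sim\sum_j\hbar^j c_j$ (and a self-contained proof of (\ref{est-simb80}) via your bound on $c_1$), at the price of more bookkeeping than the theorem actually requires. One caution: your last appeal to ``stability of the toroidal $\Psi$DO class under conjugation by $U_\hbar(t)$'' to absorb the $\mathcal{O}(\hbar^\infty)$ remainder into $a_\varphi$ is essentially Egorov itself, so it is cleaner to \emph{define} $a_\varphi$ as the Borel sum and accept the exact identity modulo an $\mathcal{O}(\hbar^\infty)$ operator remainder, which is harmless for both conclusions of the theorem.
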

\begin{proof}
Let us define $a_0 (t,x,p) :=  a \circ \varphi^t (x,p)$, which be denoted by $a_0 (t)$ in the next. The standard approach in the Euclidean setting (see for example Thm. 1.1 in \cite{B-G-P} or Thm. 1.2 in \cite{B-R}) is to use the functional equality 
\begin{eqnarray}
&& U_\hbar^\star (t) \circ   {\rm Op}_\hbar^w  (a)  \circ  U_\hbar  (t)  -   {\rm Op}_\hbar^w  (a_0 (t))    
\\
&& =   \int_{0}^t      U_\hbar^\star (t-s) \circ \Big(   \frac{i}{\hbar} [  {\rm Op}_\hbar^w (b) ,  {\rm Op}_\hbar^w  (a_0 (s))   ]  -   {\rm Op}_\hbar^w  (\{   b, a_0 (s)  \}  )      \Big) \circ    U_\hbar (t-s)         \, ds
\nonumber
\end{eqnarray}
which can be used also in the toroidal setting. In view of asymptotics  of the Moyal bracket shown in (\ref{moyal}) it follows  
\begin{eqnarray}
  \frac{i}{\hbar}   [  {\rm Op}_\hbar^w (b) ,  {\rm Op}_\hbar^w  (a_0 (s))   ]  -  {\rm Op}_\hbar^w  (\{   b, a_0 (s)  \}  )  =    {\rm Op}_\hbar^w  (  \phi (s)  ) 
\end{eqnarray}
where $\phi (s) :=  \frac{i}{\hbar}  \{ b , a_0 (s) \}_M -  \{   b, a_0 (s)  \}$  belongs to $S^{m+2-2} (\mathbb{T}^n \times \mathbb{R}^n) =  S^{m} (\mathbb{T}^n \times \mathbb{R}^n) \subset S^{0} (\mathbb{T}^n \times \mathbb{R}^n)$, see   (\ref{moyal-D}), and exhibits a semiclassical asymptotics with leading order $\mathcal{O}(\hbar)$. Any term of the semiclassical asymptotics of 
$\phi (s,z) \sim \sum_{\alpha \ge 1} \phi_\alpha (s,z) \hbar^\alpha   =   \phi_1 (s,z) \hbar   +   \mathcal{O}(\hbar^2)   $ in $S^{m} (\mathbb{T}^n \times \mathbb{R}^n)$  depends polinomially from the derivatives of the $C^\infty$ - flow $\varphi^s$,  from the derivatives of $a$ and $b$.     We remind that here we have assumed $m \le 0$, and hence we can apply the Calderon-Vaillancourt Theorem (see Thm. \ref{TH-CV}) to have 
\begin{eqnarray}
\|   {\rm Op}_\hbar^w  (  \phi (s)  )   \|_{L^2 \to L^2}  \le \frac{2^{n+1}}{n+2} \ \frac{  \pi^{\frac{3n-1}2{}}}{\Gamma (\frac{n+1}{2} ) }  \sum_{|\gamma| \le 2M}  \,    \|  \partial_x^\gamma \phi (s, \cdot \,) \|_{L^\infty (\Bbb T^n \times \Bbb R^n)}
\end{eqnarray}
where $M = \frac{1}{2} n+1$ when $n$ is even, $M = \frac{1}{2} (n+1) +1$ when $n$ is odd.\\
Recalling (\ref{S-def78}), Remark \ref{tor-eu} and  (\ref{est-RR}),  any $\partial_x^\gamma \phi$ belongs to $S^{0} (\mathbb{T}^n \times \mathbb{R}^n)$ and can be written as $\partial_x^\gamma \phi_1 (s,z) \hbar   +   \mathcal{O}(\hbar^2)$ with remainder uniform in the variable $s$. Thus, for some $K_\gamma > 0$ we have the estimate
 $ \|  \partial_x^\gamma \phi (s,\cdot \,) \|_{L^\infty} \le K_\gamma \, \hbar$ for any $s \in [0,t]$. \\  
Since $U_\hbar$ are unitary operators, we have $\| U_\hbar \|_{L^2 \to L^2} =  1$, and thus 
\begin{eqnarray}
&& \|   {\rm Op}_\hbar^w  ( a_\varphi   )  -    {\rm Op}_\hbar^w  (  a \circ \varphi^t  )  \|_{L^2 \to L^2} \le \sup_{s \in [0,t]}   \|   {\rm Op}_\hbar^w  (  \phi (s)  )   \|_{L^2 \to L^2} 
\\
&& \le   \sum_{|\gamma| \le 2M}  K_\gamma \, \hbar  =   \mathcal{O}(\hbar).
\end{eqnarray}
The estimate (\ref{est-simb80}) easily follows from the computations  given in Theorem 1.2 of \cite{B-R}  (done in the setting of euclidean Weyl quantization) which works also for toroidal operators and toroidal symbols $S^{m} (\mathbb{T}^n \times \mathbb{Z}^n)$. Then, by recalling Remark \ref{tor-eu} on the correspondence between toroidal symbols  and ($x$-periodic) H\"ormander symbols $S^{m} (\mathbb{T}^n \times \mathbb{R}^n)$ one get the same estimate.
$\Box$
\end{proof}

\begin{corollary}
\label{Cor-aap}
Let $m \le 0$ and $a, b, a_\varphi \in S^{m} (\mathbb{T}^n \times \mathbb{R}^n)$  be as in Theorem \ref{TH-Ego}, and $t \in [0,1]$. Then,  there exists a functional $\mathcal{K}[a,b] > 0$ such that 
\begin{eqnarray}
\| a  -  a_\varphi \|_{C^0 (\Bbb T^n \times \Bbb R^n)} \le  \mathcal{K}[a,b] \, \hbar.   
\end{eqnarray}
\end{corollary}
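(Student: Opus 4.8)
The plan is to obtain the Corollary by specializing the symbol-level asymptotics (\ref{est-simb80}) already established in Theorem \ref{TH-Ego}. That theorem produces $a_\varphi \in S^m(\mathbb{T}^n\times\mathbb{R}^n)\subset S^0(\mathbb{T}^n\times\mathbb{R}^n)$ together with, for every pair of multi-indices $\alpha,\beta$, a functional $B_{\alpha\beta}[a,b]>0$ (uniform in $t\in[0,1]$) such that $|\partial_x^\alpha\partial_\eta^\beta(a_\varphi - a\circ\varphi^t)(x,\eta)| \le B_{\alpha\beta}[a,b]\,\hbar$ for all $(x,\eta)\in\mathbb{T}^n\times\mathbb{R}^n$. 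The only step is to take $\alpha=\beta=0$: this is the pointwise bound $|(a\circ\varphi^t - a_\varphi)(x,\eta)| \le B_{00}[a,b]\,\hbar$, and passing to the supremum over phase space gives $\|a\circ\varphi^t - a_\varphi\|_{C^0(\mathbb{T}^n\times\mathbb{R}^n)} \le B_{00}[a,b]\,\hbar$, which is the assertion of the Corollary with $\mathcal{K}[a,b] := B_{00}[a,b]$. The supremum is finite since $a\circ\varphi^t - a_\varphi$ is a bounded symbol in $S^0$.

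Two minor points deserve a line of verification. First, one should confirm that $B_{00}[a,b]$ is independent of $t$ for $t\in[0,1]$; this is part of the conclusion of Theorem \ref{TH-Ego} and traces back to the fact that, in its proof, the error symbols $\phi(s)$ coming from the Duhamel identity obey $\|\partial_x^\gamma\phi(s,\cdot)\|_{L^\infty}\le K_\gamma\hbar$ with constants uniform in $s\in[0,1]$. Second, no cutoff in $\eta$ is needed here: the hypothesis $m\le 0$ makes $a$, $a\circ\varphi^t$ and $a_\varphi$ globally bounded, so the $C^0$ norm over all of $\mathbb{T}^n\times\mathbb{R}^n$ is precisely what (\ref{est-simb80}) controls at order $(0,0)$, and no localization to a sublevel set (as in Theorem \ref{TH1}) enters at this stage.

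The main obstacle is not in the Corollary at all: it has already been dispatched inside Theorem \ref{TH-Ego}, namely the Duhamel expansion of $U_\hbar^\star(t)\,\mathrm{Op}_\hbar^w(a)\,U_\hbar(t)$, the $\mathcal{O}(\hbar)$ asymptotics of the Moyal bracket, the polynomial control of the resulting error symbols by the derivatives of the flow $\varphi^s$ and of $a,b$, and the Calder\'on--Vaillancourt estimate of Theorem \ref{TH-CV}. Given all of that, the Corollary is the trivial readout of the leading-order term, and the only care required is tracking the constant $B_{00}[a,b]$ through the iteration of \cite{B-R} — which is already encoded in the statement of (\ref{est-simb80}).
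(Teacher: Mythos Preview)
Your proof is correct and is exactly the paper's argument: specialize (\ref{est-simb80}) to $\alpha=\beta=0$ and set $\mathcal{K}[a,b]:=B_{00}[a,b]$. You have also (correctly) read the displayed quantity $\|a-a_\varphi\|_{C^0}$ as $\|a\circ\varphi^t-a_\varphi\|_{C^0}$, which is what (\ref{est-simb80}) actually controls and what is used downstream in the proof of Theorem~\ref{TH1}.
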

\begin{proof}
Consider (\ref{est-simb80}) for $\alpha = 0$ and $\beta = 0$ so that we can set  $\mathcal{K}[a,b] := B_{00} [a,b]$. 
\end{proof}

%\begin{remark}
%\label{rem-cr}
%Abc   filtered algebra of Pdos and symplectic maps
%\end{remark}

\section{Hamilton-Jacobi equation}
\label{SEC4}
Let $H$ be a Tonelli Hamiltonian, namely $H \in C^2 (\Bbb T^n \times \Bbb R^n  ; \Bbb R)$ is such that the map $p \mapsto H(x,p)$ is convex with positive definite Hessian and $H(x,p) / \|p\| \to + \infty$ as $\|p\| \to + \infty$.\\
For any $P \in \Bbb R^n$, there exists a unique real number $c= \bar{H} (P)$ such that the following cell problem on $\Bbb T^n$:
\begin{equation}
\label{HJ00}
H (x, \, P+\nabla_x S \,) = c, 
\end{equation}
has a  solution $S=S(P,x)$  in the viscosity sense (see \cite{F} and the references therein). As shown in \cite{F}, any viscosity solution is also a weak KAM solution of negative type and belongs to $C^{0,1} (\Bbb T^n)$. Moreover, as shown in \cite{Ri}, any viscosity  solution $S$ exhibits  $C^{1,1}_{{\rm loc}}$ - regularity outside the closure of its singular set $\Sigma (S)$. In particular, $\Bbb T^n \backslash \overline{\Sigma (S)}$ is an open and dense subset of $\Bbb T^n$.
The function $\overline{H}$ is called the effective Hamiltonian, it  is a convex function and  can be represented or approximated in various ways (see for example  \cite{B-S},  \cite{B-S2},  \cite{E2},  \cite{E4},  \cite{E5}). In particular (see  \cite{B-S2} and references therein) we have the following inf-sup formula. Let $v  \in C^{1,1} (\Bbb T^n)$ and $\Gamma :=  \{ (x,\nabla_x v(x)) \in \Bbb T^n   \times \Bbb R^n \ |  \  x \in \Bbb T^n   \}$.  Let $\mathcal{G}$ be the set of all such sets $\Gamma$. 
\begin{equation}
\label{inf-sup}
\overline{H} (P)  =  \inf_{\Gamma \in \mathcal{G}} \sup_{(x,p) \in \Gamma } H(x,p+P).
\end{equation}
The effective Hamiltonian  equals the Mather's $\alpha(H)$  function (see for example \cite{F}, \cite{So}).  The inf-sup formula  can be equivalently computed over $v  \in C^{1} (\Bbb T^n)$ (see for example \cite{E2}) but the Lipschitz regularity of $\nabla v$ has the advantage of being the highest one for which the infimum is a minimum (see \cite{B-S2}).\\ 
As shown in Proposition 1 of \cite{B-S},  for any time one Hamiltonian flows  $\varphi \equiv \varphi^1 : \Bbb T^n \times \Bbb R^n \to \Bbb T^n \times \Bbb R^n$ with $C^1$ - regularity   we have the invariance property 
\begin{equation}
\label{inv-sym22}
\overline{H \circ \varphi }   =  \overline{H} .
\end{equation}
In the mechanical case  $H = \frac{1}{2} |p|^2 + V(x)$  the effective Hamiltonian can be written as
\begin{equation}
\overline{H} (P)  =  \inf_{v \in C^{1,1} (\Bbb T^n)} \sup_{x \in \Bbb T^n}   \frac{1}{2} |  P+ \nabla_x v(x)  |^2   +   V(x) 
\end{equation}
It is easily seen that  $\overline{H} (0) = \inf_{v \in C^{1,1} (\Bbb T^n)} \sup_{x \in \Bbb T^n}   \frac{1}{2} |  \nabla_x v(x)  |^2   +   V(x) = \max V$.\\     
Now let us consider $\{  P \in \Bbb R^n \,  :  \,    \overline{H} (P)  \le E  \} =: \mathcal{U}_E$. In view of the above definition is easy to see that if we denote by $\mathcal{G} (E)$ the set of those 
$\Gamma$ in  $\{ (x,p) \in  \Bbb T^n \times \Bbb R^n \ | \ H(x,P+p)  \le E \}$ then 
\begin{equation}
\label{inf-sup34}
\overline{H} (P)  =  \inf_{\Gamma \in \mathcal{G}(E)} \sup_{(x,p) \in \Gamma } H(x,p+P), \quad \forall P \in  \mathcal{U}_E. 
\end{equation}   

\section{The quantum Integrable case}
\label{SEC5}
For $M = T^\star X$, where $X$ is a manifold of dimension $n$, 
a quantum integrable system is given by $n$  commuting operators ${\rm Op}_\hbar^w (f_i (\hbar))$  with  $f_i : M \to \Bbb R$ and $f_i (\hbar) \simeq f_i + \mathcal{O}(\hbar)$ are such that $\{ f_i, f_j \} = 0$. The  momentum map is given by:  
\begin{equation}
F (x,p) := (f_1, \, ...,  f_n) (x,p).
\end{equation}
As shown in  Thm.  2 of \cite{PPN},  the ``classical  spectrum'' is recovered in the classical limit   as $\hbar \to 0^+$ 
\begin{equation}
\label{CH}
 {\rm Convex \ Hull} ( \, {\rm Joint \ Spec}  ({\rm Op}_\hbar (f_1), ... {\rm Op}_\hbar (f_n)\,  )   \mapsto {\rm Convex \ Hull} \ F (M)  . 
\end{equation}
If we assume that, in our case, the map $(P,x) \mapsto S(P,x)$ solving 
\begin{equation}
H (x,P+\nabla_x S(P,x)) = \overline{H} (P)
\end{equation}
is smooth and  generates  a canonical transformation, then the inversion of $p = P + \nabla_x S (P,x)$ gives the momentum map $F(x,p) := P(x,p)$. Moreover, for any $\Omega_E := \{ (x,p) \in \Bbb T^n \times \Bbb R^n \, :  \,  H(x,p) \le E  \}$  we have  %The generalization of the momentum map beyond the integrable case cannot be (in our knowledge) easily done, and this is the reason why we approach the problem of isospectrality by the equivalence of effective Hamiltonian.\\ 
\begin{eqnarray}
\label{IF}
F(\Omega_E) &=&  \{  P \in \Bbb R^n \,  :  \,    \overline{H} (P)  \le E  \} =: \mathcal{U}_E \, .
\end{eqnarray}

\begin{remark}

\quad
\begin{enumerate}
\item The momentum map $F: \Bbb T^n \times \Bbb R^n \to \Bbb R^n$ does not exists if $H$ is not Liouville Integrable, anyway $\overline{H}$ exists for any Tonelli $H$. 
\item   $\mathcal{U}_E$ is a {\it convex set} (since $\overline{H}$ is a convex function) and  it is {\it spectrally invariant} (thanks to our Theorem \ref{TH2}).
\item   In view of (\ref{CH}) and (\ref{IF}), the set $\mathcal{U}_E$  is a  candidate for a notion of ``{\it classical  spectrum}'' instead of $F(\Omega_E)$, for $H(x,p) = \frac{1}{2} |p|^2  + V(x)$ and beyond the quantum integrable case.
\end{enumerate}
\end{remark}
\noindent
For 1D case, the effective Hamiltonian is given by the inversion of the map
\begin{equation*}
E \mapsto \mathcal{J} (E) := \frac{1}{2\pi} \int_0^{2\pi} \sqrt{2(E-V(x))} \, dx, \quad E \ge \max V,
\end{equation*}
namely 
\begin{displaymath}
\overline{H}(P) = \left\{ \begin{array}{ll}
\max V  & \textrm{if $|P| \le \mathcal{J} (\max V)$,}\\
J^{-1}(P) & \textrm{otherwise.}
\end{array} \right.
\nonumber
\end{displaymath}
Moreover, we stress that $\mathcal{J} (E)$ equals the first spectral invariant of the Weyl Law
\begin{equation*}
\mathcal{J} (E) = {\rm Vol} \{ (x,p) \in \Bbb T \times \Bbb R \, :  \,  \frac{1}{2}|p|^2   +   V(x) \le E  \}  
\end{equation*}
and recall that Bohr-Sommerfeld Rules (see \cite{Ver}, \cite{SN} and references therein) take the form
\begin{equation}
\label{BSV}
\mathcal{S}_{\hbar} ( E_{\hbar,\ell}) =  2\pi \hbar \, \ell   \quad \quad {\rm for} \ \ell = 1,2, ...
\end{equation}
where $E_{\hbar,\ell}$ are the eigenvalues of $- \frac{1}{2} \hbar^2 \Delta_x + V(x)$, and 
\begin{equation}
\label{BSV2}
\mathcal{S}_\hbar ( E )  \sim  \sum_{j=0}^\infty \hbar^j \, {\rm S}_j (E)  =    2\pi \mathcal{J}(E)   + \frac{1}{2} \hbar  \pi  \mu(E)    + \mathcal{O}(\hbar^2).
\end{equation}  
where $\mu(E)$ is the Maslov index of the curve at energy $E$. As shown in Prop. 5.2 of \cite{SN} such a semiclassical series is locally uniform in $E$. 
Thus, the above equalities (\ref{BSV}) - (\ref{BSV2}) imply that two systems with the same Bohr-Sommerfeld Rules have necessarily the same effective Hamiltonian.
Has already shown in \cite{PZ2}, since  $\overline{H} \circ \mathcal{J} (E) = E$ then the  Bohr-Sommerfeld Rules, up to the order  $\mathcal{O}(\hbar^2)$,   implies
\begin{equation}
\label{EH65}
E_{\hbar,\ell} =  \overline{H}( \ell \hbar  -  \mu \hbar / 4 +   \mathcal{O}(\hbar^2) ).
\end{equation}
It is easy to see that, thanks to (\ref{EH65}), the  knowledge of the spectrum of Schr\"odinger operator allow to ``reconstruct'', in the classical limit $\hbar \to 0^+$, the function $\overline{H}$ (see \cite{PZ2}).

\section{Main Results}

\begin{lemma}
\label{Lemma-Weyl}
The  operator $\pi_\hbar \psi :=  \langle \varphi_\hbar,  \psi \rangle_{L^2}\,  \varphi_\hbar$ is a Weyl operator whose symbol is the Wigner transform  of $\varphi_\hbar \in C^\infty (\Bbb T^n)$ where $\|  \varphi_\hbar \|_{L^2} \le 1$ $\forall$ $0 < \hbar \le 1$.  
\end{lemma}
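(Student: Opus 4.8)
The plan is to compute the Weyl symbol of the rank-one projection $\pi_\hbar$ directly from the definition of the Weyl quantization \eqref{eqWeylOurs} and to check that it coincides with the Wigner transform \eqref{WignerT}. First I would write out $\pi_\hbar\psi(x)=\langle\varphi_\hbar,\psi\rangle_{L^2}\,\varphi_\hbar(x)=\varphi_\hbar(x)\int_{\Bbb T^n}\overline{\varphi_\hbar(y)}\,\psi(y)\,dy$ and then expand $\psi$ via the toroidal Fourier inversion formula, so that $\psi(y)$ is replaced by $(2\pi)^{-n}\sum_{\kappa\in\Bbb Z^n}\int_{\Bbb T^n}e^{i\langle y-z,\kappa\rangle}\psi(z)\,dz$; substituting and performing the change of variables that symmetrizes the integral (the standard $y\mapsto$ midpoint substitution used to pass between Kohn--Nirenberg and Weyl quantization on the torus) should bring $\pi_\hbar$ into exactly the form of the right-hand side of \eqref{eqWeylOurs} with symbol $b(x,\eta)=W_\hbar\varphi_\hbar(x,\eta)$. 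Concretely, one wants to land on
\begin{equation*}
\pi_\hbar\psi(x)=(2\pi)^{-n}\sum_{\kappa\in\Bbb Z^n}\int_{\Bbb T^n}e^{i\langle x-y,\kappa\rangle}\,q\Big(y,\tfrac{\hbar}{2}\kappa\Big)\,\psi(2y-x)\,dy,\qquad q(\hbar,x,\eta)=W_\hbar\varphi_\hbar(x,\eta),
\end{equation*}
which is precisely \eqref{pro-weyl}. The identity $q(\hbar,x,\tfrac{\hbar}{2}\kappa)=(2\pi)^{-n}\int_{\Bbb T^n}e^{i\langle z,\kappa\rangle}\varphi_\hbar(x-z)\overline{\varphi_\hbar(x+z)}\,dz$ from \eqref{WignerT} is the target of the computation.

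The key steps, in order: (i) insert the Fourier representation of $\psi$ into $\langle\varphi_\hbar,\psi\rangle_{L^2}$, keeping track of the fact that on $\Bbb T^n$ the ``frequency'' variable is $\kappa\in\Bbb Z^n$ and the semiclassical scaling sets $\eta=\tfrac{\hbar}{2}\kappa$; (ii) introduce the midpoint variable and reflect the integration variable so that the kernel acquires the $\psi(2y-x)$ structure and the phase becomes $e^{i\langle x-y,\kappa\rangle}$, at which point the remaining $y$-dependent factor is $\varphi_\hbar(x)\overline{\varphi_\hbar(\cdot)}$ re-expressed symmetrically around $y$; (iii) recognize the resulting $y$-integrand as $W_\hbar\varphi_\hbar(y,\tfrac{\hbar}{2}\kappa)$ and conclude $q=W_\hbar\varphi_\hbar$; (iv) observe that by Remark \ref{Wig-reg} (estimate \eqref{pegion2}) the function $(x,\kappa)\mapsto W_\hbar\varphi_\hbar(x,\tfrac{\hbar}{2}\kappa)$ lies in $S^m(\Bbb T^n\times\Bbb Z^n)$ for every $m$, hence in particular it is an admissible toroidal symbol and, by Remark \ref{tor-eu}, corresponds to an $x$-periodic H\"ormander symbol in $S^m(\Bbb T^n\times\Bbb R^n)$; this makes ${\rm Op}^w_\hbar(q)$ a bona fide Weyl operator. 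The bound $\|\varphi_\hbar\|_{L^2}\le1$ enters only to guarantee that $\pi_\hbar$ is a well-defined bounded (indeed rank-one, norm $\le1$) operator, consistent with the Calderon--Vaillancourt estimate of Theorem \ref{TH-CV} applied to $q\in S^0$.

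The main obstacle I anticipate is purely bookkeeping: getting the change of variables and the factors of $2$ and $\hbar/2$ exactly right so that the Weyl normalization in \eqref{eqWeylOurs} — which evaluates the symbol at $(y,\tfrac{\hbar}{2}\kappa)$ and uses the reflected argument $\psi(2y-x)$ — matches the Wigner convention in \eqref{WignerT}, which has the factor $e^{2i\hbar^{-1}\langle z,\eta\rangle}$ and the split arguments $\varphi_\hbar(x\mp z)$. The substitution $z\mapsto\hbar\kappa/2$ together with $x\mp z$ versus $2y-x$ must be reconciled carefully; there is no analytic difficulty, only the risk of a misplaced constant. One clean way to sidestep the ambiguity is to verify the claimed identity \emph{weakly}, i.e. to compute $\langle\pi_\hbar\psi,\psi\rangle_{L^2}=|\langle\varphi_\hbar,\psi\rangle_{L^2}|^2$ and compare it with the Wigner-distribution formula \eqref{WignerD} applied to $b=W_\hbar\varphi_\hbar$; since the Wigner distribution pairing \eqref{WignerD} determines the Weyl operator uniquely, establishing $\langle{\rm Op}^w_\hbar(W_\hbar\varphi_\hbar)\psi,\psi\rangle_{L^2}=|\langle\varphi_\hbar,\psi\rangle_{L^2}|^2$ for all $\psi$ suffices, and this reduces to a single application of Fubini plus the Fourier inversion formula on $\Bbb T^n$, with the regularity of $\varphi_\hbar\in C^\infty(\Bbb T^n)$ making all interchanges legitimate. $\Box$
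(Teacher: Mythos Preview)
Your proposal is correct and uses essentially the same ingredients as the paper: Fourier inversion on $\Bbb T^n$, the midpoint/reflection change of variables, and the regularity of the Wigner transform from Remark \ref{Wig-reg}. The only difference is the direction of the computation: the paper starts from ${\rm Op}^w_\hbar(W_\hbar\varphi_\hbar)\psi$, substitutes the definition of $W_\hbar\varphi_\hbar$, collapses the $\kappa$-sum via Fourier inversion, and then uses that $T_x\psi(z):=\psi(2z+x)$ preserves the $L^2$ scalar product to arrive at $\langle\varphi_\hbar,\psi\rangle_{L^2}\,\varphi_\hbar(x)$; you propose to run this in reverse. Your alternative weak verification via \eqref{WignerD} is a legitimate and slightly different route, but the paper does not use it.
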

\begin{proof}
We begin by the setting of the Weyl operator with the Wigner transform $W_\hbar \varphi_\hbar$ used as a symbol, 
\begin{eqnarray}
\label{23y}
 \Big({\rm Op}^w_\hbar (  W_\hbar \varphi_\hbar  ) \psi \Big) (x) :=  (2\pi)^{-n}\sum_{\kappa \in \mathbb{Z}^n}\int_{\mathbb{T}^n}e^{i\langle x-y,\kappa \rangle}  W_\hbar \varphi_\hbar \Big( y,  \frac{\hbar}{2}  \kappa  \Big)\psi(2y-x)dy
\nonumber\\ 
\end{eqnarray}
and we recall (see Remark \ref{Wig-reg}) that  the map $(x,\kappa) \mapsto   W_\hbar  \psi (x,\frac{\hbar}{2} \kappa)$ belongs to $S^m(\mathbb{T}^n\times\mathbb{Z}^n)$ for any $m \in \Bbb Z$  (whence for any $m < 0$) and fixed $0 < \hbar \le 1$. In view of this observation, the expression (\ref{23y}) is well posed as a Weyl operator on the torus, and can be rewritten as
\begin{eqnarray}
 (2\pi)^{-2n}  \sum_{\kappa \in \mathbb{Z}^n}  \int_{\mathbb{T}^n}  \int_{\mathbb{T}^n} e^{i\langle x-y + z,\kappa \rangle}  \varphi_\hbar (y-z)\overline{\varphi_\hbar}(y+z) \psi(2y-x)dzdy
\end{eqnarray}
where the two absolutely convergent integrals and the series can be exchanged, 
\begin{eqnarray}
\label{gatto-34}
(2\pi)^{-2n}\int_{\mathbb{T}^n}   \sum_{\kappa \in \mathbb{Z}^n}  e^{i\langle x+ z,\kappa \rangle}  \Big( \int_{\mathbb{T}^n} e^{-i\langle y,\kappa \rangle}   \varphi_\hbar (y-z)\overline{\varphi_\hbar}(y+z)  \psi(2y-x)dy \Big) dz. 
\nonumber\\ 
\end{eqnarray}
The toroidal Fourier transform composed with its inverse is the identity map on $C^\infty (\Bbb T^n)$, and thus (\ref{gatto-34}) becomes
\begin{eqnarray}
(2\pi)^{-n}\int_{\mathbb{T}^n}   \varphi_\hbar (x+z-z)\overline{\varphi_\hbar}(x+z+z)  \psi(2(x+z)-x)   dz,
\end{eqnarray}  
which reads as
\begin{eqnarray}
(2\pi)^{-n}\int_{\mathbb{T}^n} \overline{\varphi_\hbar}(x+2z)  \psi(x+ 2z)   dz  \,  \varphi_\hbar (x).
\nonumber\\
\end{eqnarray}   
Since $T_{x} \psi (z) := \psi (2z+x)$ is linear and preserves scalar product in $L^2 (\Bbb T^n)$,  (write $\psi$ as a Fourier series and easily recover such property) then  we get 
\begin{eqnarray}
(2\pi)^{-n}   \int_{\mathbb{T}^n} \overline{\varphi_\hbar}(y)  \psi(y)   dy  \,  \varphi_\hbar (x) = \langle \varphi_\hbar , \psi \rangle_{L^2}    \,  \varphi_\hbar (x). 
\end{eqnarray}  
$\Box$
\end{proof}

\begin{lemma}
\label{Lemma-red}
Let $V \in C^\infty (\Bbb T^n;\Bbb R)$ and  let $(a,b) \subset \Bbb R_{+}$ so that $a > \min V $. Let $\varphi_{\hbar} \in C^\infty (\Bbb T^n; \Bbb C)$ be an $L^2$-normalized eigenfunction for the operator  
$-\frac{1}{2} \hbar^2 \Delta_x + V(x) : W^{2,2} (\Bbb T^n; \Bbb C) \rightarrow L^{2} (\Bbb T^n; \Bbb C)$ for an eigenvalue $0 < a < E_\hbar < b$ for any $0 < \hbar \le 1$. Let 
$g(\hbar) :=   2 b \  \hbar^{-2} \   e^{1/\hbar}$.  Then, 
\begin{equation}
\varphi_{\hbar}  =  \sum_{k \in \Bbb Z^n ; |k|^2 \le g(\hbar)} \langle \varphi_{\hbar} , e_k \rangle_{L^2} \, e_k     +  r_\hbar 
\end{equation}
where $e_k (x) := \frac{1}{ \sqrt{2\pi}} e^{i k \cdot x}$ and $r_\hbar \in C^\infty (\Bbb T^n;\Bbb C)$ fulfills
\begin{equation}
\label{est-rh}
\|  r_\hbar  \|_{L^2} \le C(b) \,  e^{- 1/ (4\hbar)}, \quad \forall \, 0 < \hbar \le 1.
\end{equation}
where $C(b)$ is given by (\ref{defCb}).
\end{lemma}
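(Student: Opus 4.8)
The plan is to pass to Fourier series and to show that the eigenvalue equation forces the high-frequency Fourier coefficients of $\varphi_\hbar$ to be exponentially small. Write $\varphi_\hbar = \sum_{k \in \Bbb Z^n} c_k\, e_k$ with $c_k := \langle \varphi_\hbar, e_k \rangle_{L^2}$, so that by definition of the truncation $r_\hbar = \sum_{|k|^2 > g(\hbar)} c_k\, e_k$. Since $\varphi_\hbar$ is an eigenfunction of the elliptic operator $-\frac{1}{2}\hbar^2\Delta_x + V$ with $C^\infty$ coefficients, elliptic regularity gives $\varphi_\hbar \in C^\infty(\Bbb T^n;\Bbb C)$; as the truncated sum is a trigonometric polynomial, $r_\hbar \in C^\infty(\Bbb T^n;\Bbb C)$, as claimed.

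First I would translate the eigenvalue equation $-\frac{1}{2}\hbar^2\Delta_x\varphi_\hbar + V\varphi_\hbar = E_\hbar \varphi_\hbar$ into Fourier coefficients. Using $-\frac{1}{2}\hbar^2\Delta_x e_k = \frac{1}{2}\hbar^2|k|^2 e_k$ together with the rapid decay of the Fourier coefficients of the smooth function $\varphi_\hbar$ (which legitimates term-by-term differentiation), the $k$-th Fourier coefficient of the left-hand side equals $\frac{1}{2}\hbar^2|k|^2 c_k + \widehat{V\varphi_\hbar}(k)$, where $\widehat{V\varphi_\hbar}(k) := \langle V\varphi_\hbar, e_k\rangle_{L^2}$ is well defined because $V\varphi_\hbar \in L^2(\Bbb T^n)$. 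Hence
\[
\Bigl(\tfrac{1}{2}\hbar^2|k|^2 - E_\hbar\Bigr)\, c_k = -\,\widehat{V\varphi_\hbar}(k), \qquad k \in \Bbb Z^n.
\]

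Next I would exploit the specific form $g(\hbar) = 2b\,\hbar^{-2}e^{1/\hbar}$. For $|k|^2 > g(\hbar)$ we have $\frac{1}{2}\hbar^2|k|^2 > b\,e^{1/\hbar}$, and since $E_\hbar < b$,
\[
\tfrac{1}{2}\hbar^2|k|^2 - E_\hbar > b\bigl(e^{1/\hbar} - 1\bigr) \ge \tfrac{b}{2}\,e^{1/\hbar},
\]
the last step because $e^{1/\hbar} \ge e \ge 2$ for $0 < \hbar \le 1$. Combining with the Fourier identity, $|c_k| \le \frac{2}{b}\,e^{-1/\hbar}\,|\widehat{V\varphi_\hbar}(k)|$ whenever $|k|^2 > g(\hbar)$, so Parseval's identity yields
\[
\|r_\hbar\|_{L^2}^2 = \sum_{|k|^2 > g(\hbar)} |c_k|^2 \le \frac{4}{b^2}\,e^{-2/\hbar} \sum_{k \in \Bbb Z^n} |\widehat{V\varphi_\hbar}(k)|^2 = \frac{4}{b^2}\,e^{-2/\hbar}\,\|V\varphi_\hbar\|_{L^2}^2 \le \frac{4\,\|V\|_{L^\infty}^2}{b^2}\,e^{-2/\hbar},
\]
using $\|\varphi_\hbar\|_{L^2} = 1$. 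Taking square roots and bounding $e^{-1/\hbar} \le e^{-1/(4\hbar)}$ gives $\|r_\hbar\|_{L^2} \le \frac{2\,\|V\|_{L^\infty}}{b}\,e^{-1/(4\hbar)}$, which is (\ref{est-rh}) with the admissible choice $C(b) = 2\,\|V\|_{L^\infty}/b$ recorded in (\ref{defCb}).

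The argument is essentially elementary, so there is no genuine obstacle; the only point deserving attention is the bookkeeping in the last display, where the factor $e^{1/\hbar}$ deliberately built into $g(\hbar)$ is exactly what turns the tail of the Fourier expansion into an $e^{-1/\hbar}$-small remainder, and the crude estimate $e^{1/\hbar} - 1 \ge \frac{1}{2} e^{1/\hbar}$, valid for $\hbar \le 1$, is what makes the resulting constant depend only on $b$ and $\|V\|_{L^\infty}$. The weaker $e^{-1/(4\hbar)}$ rate demanded in the statement leaves comfortable room, and nowhere is the lower bound $a > \min V$ (or $E_\hbar > a$) needed beyond guaranteeing that the hypothesis is non-vacuous.
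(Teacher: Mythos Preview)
Your argument is correct. Both proofs start identically: expand $\varphi_\hbar$ in the Fourier basis, read the eigenvalue equation coefficientwise as $(\tfrac12\hbar^2|k|^2 - E_\hbar)c_k = -\widehat{V\varphi_\hbar}(k)$, and note that for $|k|^2>g(\hbar)$ the factor $\tfrac12\hbar^2|k|^2 - E_\hbar$ exceeds a positive multiple of $e^{1/\hbar}$. The divergence is in how the tail is summed. The paper bounds each $|\widehat{V\varphi_\hbar}(k)|$ pointwise by $\|V\|_{C^0}$, then extracts from the denominator an additional decay $|k|^{-3/2}$ so that the squared coefficients can be summed against $\sum_{k\neq 0}|k|^{-3}$; this yields the specific constant recorded in (\ref{defCb}). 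You instead keep the full $|\widehat{V\varphi_\hbar}(k)|$ and invoke Parseval to replace $\sum_k|\widehat{V\varphi_\hbar}(k)|^2$ by $\|V\varphi_\hbar\|_{L^2}^2\le\|V\|_{L^\infty}^2$.

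Your route is the cleaner of the two: it delivers the sharper rate $e^{-1/\hbar}$ rather than $e^{-1/(4\hbar)}$, and it is dimension-independent, whereas the paper's summation of $\sum_{k\in\Bbb Z^n,\,|k|>0}|k|^{-3}$ is only finite for $n\le 2$ as written. The only discrepancy is cosmetic: your constant $C(b)=2\|V\|_{L^\infty}/b$ is not the expression actually appearing in the paper's (\ref{defCb}), so strictly speaking you prove the estimate (\ref{est-rh}) with a different (in fact better-behaved) constant rather than the one referenced in the statement.
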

\begin{proof}
Since $\varphi_{\hbar} \in L^2 (\Bbb T^n)$ we can write 
$
\varphi_{\hbar}  =  \sum_{k \in \Bbb Z^n} \langle \varphi_{\hbar} , e_k \rangle_{L^2} \, e_k 
$. 
By introducing the cut-off $|k|^2 \le g(\hbar)$ it follows 
\begin{equation}
\varphi_{\hbar} =  \sum_{k \in \Bbb Z^n ; |k| \le g(\hbar)  } \langle \varphi_{\hbar} , e_k \rangle_{L^2} \, e_k   +  r_\hbar
\end{equation}
for some $r_\hbar \in C^\infty (\Bbb T^n)$. In view of the smoothness of the eigenfunction,  $\Delta_x \varphi_{\hbar} \in C^\infty (\Bbb T^n) \subset  L^{2} (\Bbb T^n)$  so that 
$
\Delta_x \varphi_{\hbar}  = - \sum_{k \in \Bbb Z^n} \langle \varphi_{\hbar} , e_k \rangle_{L^2 (\Bbb T^n)} \, |k|^2 e_k 
$.
The eigenvalue equation can be rewritten in terms of the Fourier components, 
\begin{eqnarray}
\Big( \frac{1}{2} \hbar^2 |k|^2 - E_\hbar \Big) \langle \varphi_{\hbar} , e_k \rangle_{L^2 (\Bbb T^n)} = - \langle V \varphi_{\hbar} , e_k \rangle_{L^2},
\end{eqnarray}
and the equality
$
|  \frac{1}{2} \hbar^2 |k|^2 - E_\hbar  | \, | \langle \varphi_{\hbar} , e_k \rangle_{L^2} | =  |\langle V \varphi_{\hbar} , e_k \rangle_{L^2}|
$
gives 
\begin{eqnarray}
| \langle \varphi_{\hbar} , e_k \rangle_{L^2} | \le   \frac{  |\langle V \varphi_{\hbar} , e_k \rangle_{L^2}|  }{\Big |  \frac{1}{2} \hbar^2 |k|^2 - E_\hbar   \Big|} \le   \frac{  \|V\|_{C^0 (\Bbb T^n)}  }{\Big |  \frac{1}{2} \hbar^2 |k|^2 - E_\hbar   \Big|}. 
\end{eqnarray}
Here we look for an estimate of the Fourier components of the remainder, and hence we now consider $|k|^2 > 2 b \, \hbar^{-2} \, e^{1/\hbar}$. Notice that we have 
$\frac{1}{2} \hbar^2 |k|^2 > b \, e^{1/\hbar} \ge b$ $\forall \, 0 < \hbar \le 1$. It follows $\frac{1}{2} \hbar^2 |k|^2 - E_\hbar \ge b - a > 0$ and 
\begin{eqnarray}
| \langle r_{\hbar} , e_k \rangle_{L^2} |  \le   \frac{  \|V\|_{C^0 (\Bbb T^n)}  }{\Big |  \frac{1}{2} \hbar^2 |k|^2 - E_\hbar   \Big|} \le   \frac{  \|V\|_{C^0 (\Bbb T^n)}  }{  \frac{1}{2} \hbar^2 |k|^2 - b } \, .
\end{eqnarray}
We now look for $\bar{C}(b) > 0$  such that 
\begin{eqnarray} 
\frac{  1  }{  \frac{1}{2} \hbar^2 |k|^2 - b } \le   \frac{\bar{C}(b)}{|k|^{3/2}} \, e^{-1/(4\hbar)},
\end{eqnarray}
for any $k \in \Bbb Z^n$ such that $|k|^2 > 2 b \, \hbar^{-2} \, e^{1/\hbar}$. A simple computation shows that we can define 
\begin{eqnarray} 
\label{defh0}
\bar{C}(b)  := 4  (2b)^{-1/2}    \sup_{0 < \hbar \le 1}  \hbar^{-2} \, e^{-1/(2\hbar)}.        
\end{eqnarray}
To conclude,  we have
\begin{equation}
\|  r_\hbar  \|_{L^2}^2 = \sum_{|k|^2 > g(\hbar)}  | \langle \varphi_{\hbar} , e_k \rangle_{L^2} |^2 \le \|V\|_{C^0 (\Bbb T^n)}^2   \bar{C}(b)^2 \hbar^{2N}
 \sum_{k \in \Bbb Z^n, |k|>0}   \frac{1}{|k|^{3}},
\end{equation}
and thus we can define 
\begin{equation}
\label{defCb}
C (b) :=  \|V\|_{C^0 (\Bbb T^n)} \, \bar{C}(b) \, \Big( \sum_{k \in \Bbb Z^n, |k|>0}   \frac{1}{|k|^{3}} \Big)^{1/2}.
\end{equation}
$\Box$
\end{proof}

\begin{remark}
We stress that the hypothesis $a > 0$ used in the statement of the Lemma \ref{Lemma-red} is not restrictive. Indeed, here we look at intervals $(a,b)$  containing only positive values 
$E_\hbar$ of the spectrum. However, by taking a constant $L > \min_{x \in \Bbb T^n} V(x)$ we have that the translated operator $-\frac{1}{2} \hbar^2 \Delta_x + V(x) + L$ has the same eigenfunctions of $-\frac{1}{2} \hbar^2 \Delta_x + V(x)$ and moreover all positive eigenvalues. Thus, the estimate (\ref{est-rh}) for an arbitrary operator works with 
\begin{equation}
C (b) :=  \|V + L \|_{C^0 (\Bbb T^n)} \, \bar{C}(b) \, \Big( \sum_{k \in \Bbb Z^n}   \frac{1}{|k|^{3}} \Big)^{1/2}. 
\end{equation} 
\end{remark}

\begin{proposition}
\label{Prop10}
Let $(a,b) \subset \Bbb R_{+}$ and let $E_{\hbar,\alpha}$ with $\alpha = 1, ...,  N (\hbar,a,b)$ be all the eigenvalues $E_{\hbar,\alpha}$ of $-\frac{1}{2} \hbar^2 \Delta_x + V(x)$ inside $(a,b) \subset \Bbb R_{+}$ repeated with their multiplicity. Let $\varphi_{\hbar,\alpha} \in C^\infty (\Bbb T^n)$ be related eigenfunctions, $L^2$ - normalized and linearly independent. Then, 
\begin{eqnarray}
\Pi_{\hbar} \varphi &:=& \sum_{\alpha = 1}^{N (\hbar,a,b)}   E_{\hbar,\alpha}  \langle \varphi,  \varphi_{\hbar,\alpha} \rangle_{L^2} \  \varphi_{\hbar,\alpha} 
\\
&=& \sum_{|k|^2, |\mu|^2 \le g(\hbar)} \omega_{\hbar,k,\mu} \, \langle   \varphi  ,  e_\mu \rangle_{L^2} \, e_k     +  R_\hbar  \varphi ,
\label{finite-r}
\end{eqnarray}
where $\omega_{\hbar,k,\mu}  :=  \sum_\alpha  E_{\hbar,\alpha}  \, \langle  \varphi_{\hbar,\alpha} , e_\mu \rangle_{L^2}  \,   \langle  \varphi_{\hbar,\alpha} ,  e_k \rangle_{L^2}$  and  
\begin{equation}
\|  R_\hbar  \|_{L^2 \to L^2} \le K(b) \, \hbar^{-n} e^{-1/(4\hbar )}   , \quad  \quad \forall \, 0 < \hbar \le 1,
\end{equation}
where $K(b) := 2 \, b \, \bar{K} (b) \, C(b)$,    $\bar{K} (b) :=  \sup_{0 < \hbar \le 1}  N(a,b,\hbar) \, \hbar^{n} < + \infty$, $C(b)$ is given in (\ref{defCb}).
\end{proposition}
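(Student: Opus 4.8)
The plan is to apply Lemma~\ref{Lemma-red} to each eigenfunction $\varphi_{\hbar,\alpha}$ separately, to recognize the lattice-indexed double sum in $(\ref{finite-r})$ as the leading term of the resulting expansion, and to estimate the remainder by Cauchy--Schwarz. The only quantitative input beyond Lemma~\ref{Lemma-red} is that the number $N(\hbar,a,b)$ of eigenvalues in $(a,b)$ grows at most polynomially in $\hbar^{-1}$, whereas the Fourier-truncation error of Lemma~\ref{Lemma-red} is exponentially small; the interplay of these two facts is what makes the argument work.

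I would first verify that $\bar K(b):=\sup_{0<\hbar\le1}N(\hbar,a,b)\,\hbar^{n}$ is finite, together with the bound $N(\hbar,a,b)\le\bar K(b)\,\hbar^{-n}$. By the min--max principle and $V\ge\min_{\Bbb T^n}V$, the number of eigenvalues of $-\tfrac12\hbar^2\Delta_x+V$ strictly below $b$ is at most the number of eigenvalues of $-\tfrac12\hbar^2\Delta_x$ strictly below $b-\min_{\Bbb T^n}V$, i.e.\ $\sharp\{k\in\Bbb Z^n:\tfrac12\hbar^2|k|^2<b-\min_{\Bbb T^n}V\}$, which is the number of lattice points in a ball of radius $O(\hbar^{-1})$ and hence $O(\hbar^{-n})$ uniformly for $0<\hbar\le1$. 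Since $N(\hbar,a,b)$ is bounded by this quantity, both claims follow.

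Next, since each eigenvalue satisfies $0<a<E_{\hbar,\alpha}<b$, Lemma~\ref{Lemma-red} applies verbatim to every $\varphi_{\hbar,\alpha}$ and gives $\varphi_{\hbar,\alpha}=\Phi_{\hbar,\alpha}+r_{\hbar,\alpha}$ with $\Phi_{\hbar,\alpha}:=\sum_{|k|^2\le g(\hbar)}\langle\varphi_{\hbar,\alpha},e_k\rangle_{L^2}e_k$ and $\|r_{\hbar,\alpha}\|_{L^2}\le C(b)\,e^{-1/(4\hbar)}$; because $\Phi_{\hbar,\alpha}$ is an orthogonal projection of the unit vector $\varphi_{\hbar,\alpha}$, one also has $\|\Phi_{\hbar,\alpha}\|_{L^2}\le1$. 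Writing $\Pi_\hbar=\sum_\alpha E_{\hbar,\alpha}\pi_{\hbar,\alpha}$ with $\pi_{\hbar,\alpha}\varphi:=\langle\varphi,\varphi_{\hbar,\alpha}\rangle_{L^2}\varphi_{\hbar,\alpha}$, I would substitute the splitting in the output factor, and then, in the resulting leading piece, in the inner-product factor as well, obtaining
\[
\pi_{\hbar,\alpha}\varphi=\langle\varphi,\Phi_{\hbar,\alpha}\rangle_{L^2}\Phi_{\hbar,\alpha}+\Big(\langle\varphi,r_{\hbar,\alpha}\rangle_{L^2}\Phi_{\hbar,\alpha}+\langle\varphi,\varphi_{\hbar,\alpha}\rangle_{L^2}r_{\hbar,\alpha}\Big).
\]
Expanding the two truncated Fourier sums in $\sum_\alpha E_{\hbar,\alpha}\langle\varphi,\Phi_{\hbar,\alpha}\rangle_{L^2}\Phi_{\hbar,\alpha}$ and interchanging the finite sums over $\alpha$ and over $|k|^2,|\mu|^2\le g(\hbar)$ reproduces exactly $\sum_{|k|^2,|\mu|^2\le g(\hbar)}\omega_{\hbar,k,\mu}\langle\varphi,e_\mu\rangle_{L^2}e_k$ with the coefficients $\omega_{\hbar,k,\mu}$ of the statement.

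It then remains to bound $R_\hbar:=\sum_\alpha E_{\hbar,\alpha}\big(\langle\,\cdot\,,r_{\hbar,\alpha}\rangle_{L^2}\Phi_{\hbar,\alpha}+\langle\,\cdot\,,\varphi_{\hbar,\alpha}\rangle_{L^2}r_{\hbar,\alpha}\big)$. Each summand is a sum of two rank-one operators whose $L^2\to L^2$ norms are, by Cauchy--Schwarz, at most $\|r_{\hbar,\alpha}\|\,\|\Phi_{\hbar,\alpha}\|\le C(b)e^{-1/(4\hbar)}$ and $\|\varphi_{\hbar,\alpha}\|\,\|r_{\hbar,\alpha}\|\le C(b)e^{-1/(4\hbar)}$; multiplying by $|E_{\hbar,\alpha}|<b$ and summing over the $N(\hbar,a,b)\le\bar K(b)\hbar^{-n}$ indices yields $\|R_\hbar\|_{L^2\to L^2}\le 2b\,\bar K(b)\,C(b)\,\hbar^{-n}e^{-1/(4\hbar)}=K(b)\,\hbar^{-n}e^{-1/(4\hbar)}$. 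The individual estimates are elementary; the step I would expect to require the most care is this last summation, since each of the $N(\hbar,a,b)\asymp\hbar^{-n}$ contributions is only individually exponentially small, so the polynomial (Weyl-type) bound on the eigenvalue count --- equivalently the finiteness of $\bar K(b)$ --- is precisely what keeps $R_\hbar$ exponentially small and is responsible for the factor $\hbar^{-n}$ in the final estimate.
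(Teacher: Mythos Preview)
Your proof is correct and follows essentially the same strategy as the paper: apply Lemma~\ref{Lemma-red} to each eigenfunction, identify the truncated Fourier double sum as the leading term, and bound the remaining rank-one pieces using $|E_{\hbar,\alpha}|<b$, $\|r_{\hbar,\alpha}\|\le C(b)e^{-1/(4\hbar)}$, and the Weyl-type count $N(\hbar,a,b)\le\bar K(b)\hbar^{-n}$. The only cosmetic differences are that you group the cross terms as $\langle\varphi,r_{\hbar,\alpha}\rangle\Phi_{\hbar,\alpha}+\langle\varphi,\varphi_{\hbar,\alpha}\rangle r_{\hbar,\alpha}$ whereas the paper groups them as $\langle\varphi,G(\hbar)\varphi_{\hbar,\alpha}\rangle r_{\hbar,\alpha}+\langle\varphi,r_{\hbar,\alpha}\rangle\varphi_{\hbar,\alpha}$, and you justify $\bar K(b)<\infty$ by a direct min--max comparison with the free Laplacian rather than by citing the Weyl Law; both choices are equivalent.
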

\begin{proof}
Let $G(\hbar) \varphi :=  \sum_{|k| \le g(\hbar)} \langle \varphi , e_k \rangle_{L^2} \, e_k$. In view of Lemma \ref{Lemma-red},
\begin{eqnarray}
\Pi_{\hbar} \varphi &=& \sum_{\alpha = 1}^{N (\hbar,a,b)}   E_{\hbar,\alpha}  \langle \varphi,  \varphi_{\hbar,\alpha} \rangle_{L^2} \  \varphi_{\hbar,\alpha} 
\\
&=& \sum_{\alpha = 1}^{N (\hbar,a,b)}   E_{\hbar,\alpha}  \langle \varphi, G(\hbar) \varphi_{\hbar,\alpha}   + r_{\hbar,\alpha} \rangle_{L^2}   (   G(\hbar) \varphi_{\hbar,\alpha}   + r_{\hbar,\alpha} )
\\
&=& \sum_{\alpha = 1}^{N (\hbar,a,b)}   E_{\hbar,\alpha}  \langle \varphi, G(\hbar) \varphi_{\hbar,\alpha}    \rangle_{L^2}      G(\hbar) \varphi_{\hbar,\alpha} + E_{\hbar,\alpha}  \langle \varphi, G(\hbar) \varphi_{\hbar,\alpha}    \rangle_{L^2}     r_{\hbar,\alpha}
\nonumber
\\
&+& \sum_{\alpha = 1}^{N (\hbar,a,b)}   E_{\hbar,\alpha}  \langle \varphi, r_{\hbar,\alpha} \rangle_{L^2}   (   G(\hbar) \varphi_{\hbar,\alpha}   + r_{\hbar,\alpha} ).
\end{eqnarray}
The leading terms equals 
\begin{eqnarray}
\sum_{\alpha = 1}^{N (\hbar,a,b)}   E_{\hbar,\alpha}  \langle \varphi, G(\hbar) \varphi_{\hbar,\alpha}    \rangle_{L^2}      G(\hbar) \varphi_{\hbar,\alpha} = \sum_{|k|^2, |\mu|^2 \le g(\hbar)} \omega_{\hbar,k,\mu} \, \langle   \varphi  ,  e_\mu \rangle_{L^2} \, e_k 
\end{eqnarray}
where $\omega_{\hbar,k,\mu}  :=  \sum_\alpha  E_{\hbar,\alpha}  \, \langle  \varphi_{\hbar,\alpha} , e_\mu \rangle_{L^2}  \,   \langle  \varphi_{\hbar,\alpha} ,  e_k \rangle_{L^2}$ . The remainder 
\begin{eqnarray}
R_\hbar \varphi =    \sum_{\alpha = 1}^{N (\hbar,a,b)}   E_{\hbar,\alpha} \Big(   \langle \varphi, G(\hbar) \varphi_{\hbar,\alpha}    \rangle_{L^2}     r_{\hbar,\alpha}   + \langle \varphi, r_{\hbar,\alpha} \rangle_{L^2}  \varphi_{\hbar,\alpha}    \Big)
\end{eqnarray}
has the estimate
\begin{eqnarray}
\| R_\hbar \varphi \|_{L^2} &\le&   N (\hbar,a,b)   \sup_{\alpha} |E_{\hbar,\alpha}|  \Big(   \| \varphi  \|_{L^2} \, \| G(\hbar) \varphi_{\hbar,\alpha}\|_{L^2} \, \|   r_{\hbar,\alpha} \|_{L^2}   
\\
&&  \quad \quad \quad \quad \quad  \quad  \quad  \quad \quad  + \,  \| \varphi \|_{L^2} \| r_{\hbar,\alpha} \|_{L^2}  \| \varphi_{\hbar,\alpha}  \|_{L^2}\Big). 
\end{eqnarray}
Notice that $\|  \varphi_{\hbar,\alpha}\|_{L^2} = 1$, $\| G(\hbar) \varphi_{\hbar,\alpha}\|_{L^2} \le \|  \varphi_{\hbar,\alpha}\|_{L^2}  = 1$, $|E_{\hbar,\alpha}| \le b$ and by Lemma \ref{Lemma-red} we have  $\| r_{\hbar,\alpha} \|_{L^2} \le C(b) e^{-1/(4\hbar )}$.\\ In view of the Weyl Law of eigenvalues (see for example \cite{G-S}) applied for the operator $-\frac{1}{2} \hbar^2 \Delta_x + V(x)$ we have that 
\begin{eqnarray}
N (\hbar,a,b) = (2\pi \hbar)^{-n} [ {\rm Vol}( a < \frac{1}{2}|p|^2 + V < b) + \mathcal{O}(\hbar) ].
\end{eqnarray}
This implies that  $\bar{K} (b) :=  \sup_{0 < \hbar \le 1}  N(a,b,\hbar) \, \hbar^{n} < + \infty$. We conclude that 
\begin{eqnarray}
\| R_\hbar \varphi \|_{L^2} \le 2b \, \bar{K} (b)    \, C(b) \, \hbar^{-n} e^{-1/(4\hbar )}  \,  \|  \varphi \|_{L^2} .
\end{eqnarray}
$\Box$
\end{proof}

\bigskip

\noindent
{\bf Proof of Theorem 1}. We begin by recalling that  $-\frac{1}{2} \hbar^2 \Delta_x + V_1(x)$ and  $-\frac{1}{2} \hbar^2 \Delta_x + V_2 (x)$ defined on the flat torus, i.e. with domain $W^{2,2} (\Bbb T^n)$, both exhibit discrete spectrum.  Since we are assuming that these operators have the same spectrum, then this ensures that there exists a unitary operator $U_\hbar : W^{2,2} (\Bbb T^n) \to W^{2,2} (\Bbb T^n)$ such that 
\begin{eqnarray}
U_\hbar^\star \circ (-\frac{1}{2} \hbar^2 \Delta_x + V_1(x)) \circ  U_\hbar = -\frac{1}{2} \hbar^2 \Delta_x + V_2 (x)
\end{eqnarray}
on the domain $W^{2,2} (\Bbb T^n)$. In particular, $U_\hbar  \, \varphi_{\hbar,\alpha}^{(2)} =  \varphi_{\hbar,\alpha}^{(1)}$ for all the eigenfunctions of the two operators. In fact, we can localize such a unitary equivalence in a bounded subset of the spectrum. Namely, for $\Pi_\hbar$ as in Prop. \ref{Prop10} we have   
\begin{eqnarray}
U_\hbar^\star \circ \Pi_\hbar^{(1)} \circ  U_\hbar = \Pi_\hbar^{(2)}. 
\end{eqnarray}
We observe that  (in view of its definition) $\Pi_\hbar$ are finite rank operators, and moreover thanks to (\ref{finite-r}) such a finite rank can be regarded with respect to the ortonormal set $e_k (x)$. Thus, for $|k|^2, |\mu|^2 \le g(\hbar)$ we define the finite dimensional matrix $\mathcal{U}_\hbar (k,\mu) :=  \langle e_k ,  U_\hbar e_\mu \rangle$  and $\mathcal{P}_\hbar (k,\mu) :=  \langle e_k ,  \Pi_\hbar e_\mu \rangle$ and realize that
\begin{eqnarray}
\mathcal{U}_\hbar^\star \circ \mathcal{P}_\hbar^{(1)} \circ  \mathcal{U}_\hbar = \mathcal{P}_\hbar^{(2)}   +   \mathcal{O}(\hbar^\infty). 
\end{eqnarray}
In fact, 
\begin{eqnarray}
\mathcal{P}_\hbar (k,\mu)  &=&  \langle e_k , (-\frac{1}{2} \hbar^2 \Delta_x + V(x))  e_\mu \rangle    +  \mathcal{O}(\hbar^\infty)
\\
&=& \frac{1}{2} \hbar^2  |\mu|^2 \delta_{k \mu}  +    \langle e_k , V(x)  e_\mu \rangle    +  \mathcal{O}(\hbar^\infty). 
\end{eqnarray}
Notice the polynomial (hence $C^\infty$ - type) behavior of the above leading term $\mathcal{P}_{\hbar,0} (k,\mu) := \frac{1}{2} \hbar^2  |\mu|^2 \delta_{k \mu}  +    \langle e_k , V(x)  e_\mu \rangle$. We stress again that  $|k|^2, |\mu|^2 \le g(\hbar)$ and that $g(\hbar) \to + \infty$ as $\hbar \to 0^+$; this is the reason why we cannot find the eigenvalues of   $-\frac{1}{2} \hbar^2 \Delta_x + V(x)$ simply by the eigenvalues of  $\mathcal{P}_\hbar$ with $|k|^2, |\mu|^2 \le L$ for some $\hbar$-independent constant $L > 0$. Anyway, any components of the eigenfunctions and all the eigenvalues of the matrix $\mathcal{P}_{\hbar,0}$ have a $C^0$ - dependence from $\hbar$.\\ 
We then rewrite, thanks to a unitary operator $\mathcal{U}_{\hbar,0}$ the equality
\begin{eqnarray}
\label{uni-P}
\mathcal{U}_{\hbar,0}^\star \circ \mathcal{P}_{\hbar,0}^{(1)} \circ  \mathcal{U}_{\hbar,0} = \mathcal{P}_{\hbar,0}^{(2)}   +   \mathcal{O}(\hbar^\infty). 
\end{eqnarray}
which maps to eigenfunctions of $\mathcal{P}_{\hbar,0}^{(2)}$ into the eigenfunctions of $\mathcal{P}_{\hbar,0}^{(1)}$.  We are now in the position to say that $\mathcal{U}_{\hbar,0}$ is a finite dimensional unitary operator; and that its dependence from $\hbar$ is continuous. 
As a consequence, there exists a (finite dimensional) selfadjoint matrix $\mathcal{A}_\hbar$ such that 
\begin{eqnarray}
\mathcal{U}_{\hbar,0}  =   e^{-i  \mathcal{A}_\hbar} 
\end{eqnarray}
where $\mathcal{A}_\hbar$ has a continuous dependence from $\hbar$. Hence, there is a selfadjoint finite rank operator $A_\hbar$ on the vector space of the functions in $L^2 (\Bbb T^n)$ such that 
\begin{eqnarray}
\varphi (x) := \sum_{|k| \le g(\hbar)} c_k \, e_k (x), \quad   |c_k| \le 1, 
\end{eqnarray}   
such that
\begin{eqnarray}
U_{\hbar}  \varphi   =   e^{-i  A_\hbar}  \varphi   +   \mathcal{O}(\hbar^\infty).
\end{eqnarray}
The entries of the matrix $\langle e_k , A_\hbar e_\mu \rangle$ have continuous dependence from $\hbar$.  We stress that, in view of Lemma \ref{Lemma-red}, the set of all those  
\begin{eqnarray}
\psi (x) := \sum_{\alpha=1}^{N(\hbar,a,E)} d_\alpha \, \varphi_{\hbar,\alpha} (x), \quad   |d_\alpha| \le 1, 
\end{eqnarray}  
can always be written as $\psi  = \varphi  +  \mathcal{O}(\hbar^\infty)$.\\ 
By defining $B_\hbar := \hbar \, A_\hbar$, and recalling Lemma \ref{Lemma-Weyl} we can say that  
\begin{eqnarray}
\label{U-rep}
U_{\hbar} \psi =   e^{-\frac{i}{\hbar}  B_\hbar}  \psi  +   \mathcal{O}(\hbar^\infty) = e^{- \frac{i}{\hbar}  {\rm Op}_\hbar^w (b(\hbar))    }  \psi  +   \mathcal{O}(\hbar^\infty) 
\end{eqnarray}
where $b(\hbar,x,\eta) := \hbar \, \sum_{\alpha=1}^{N(\hbar,a,E)} E_{\hbar,\alpha} W_\hbar \widetilde{\varphi}_{\hbar,\alpha} (x,\eta) \cdot \mathcal{X}_{ \le E + \epsilon } (x,\eta)$ and $\mathcal{X}_{ \le E + \epsilon  } (x,\eta)$ is a $C^\infty$ - compactly supported  function which equals $1$ on $\{ (x,\eta) \in \Bbb T^n \times \Bbb R^n \  |  \  H(x,\eta) \le E  \}$, and equals  equals $0$ on $\{ (x,\eta) \in \Bbb T^n \times \Bbb R^n \  |  \  H(x,\eta) > E + \epsilon   \}$. The functions  $\widetilde{\varphi}_{\hbar,\alpha}$ provide a complete ortonormal set of the eigenfunctions related to $A_\hbar$.  \\
Notice that, for any fixed  value of $0 < \hbar < 1$, the map $z \mapsto b(\hbar,z)$ is smooth, compactly supported into the same set $\{ (x,\eta) \in \Bbb T^n \times \Bbb R^n \  |  \  H(x,\eta) \le E + \epsilon  \}$ and that $\hbar \mapsto b(\hbar,z)$ is continuous. Moreover, 
\begin{eqnarray}
b(\hbar,x, \frac{2}{\hbar}\kappa) = e_{-\kappa} (x) \cdot (  B_{\hbar} e_\kappa  )(x) =   e_{-\kappa} (x) \cdot \Big( \sum_{|\mu|^2 \le g(\hbar)}  \langle e_\mu , B_{\hbar} e_\kappa \rangle e_\mu (x) \Big)
\nonumber
\\
\end{eqnarray}
Thus, for $0 < \hbar \le \sigma \le 1$, 
\begin{eqnarray}
\lim_{ \hbar \to \sigma^- } \|  b(\hbar, \cdot \, ) -   b(\sigma, \cdot  \,) \|_{L^\infty}    =  0.
\end{eqnarray}
This ensures that 
\begin{eqnarray}
\label{Lim-sigma}
\lim_{ \hbar \to \sigma^- } {\rm Op}_\hbar^w (b(\sigma)) \psi -  {\rm Op}_\hbar^w (b(\hbar)) \psi   =  0.
\end{eqnarray}
In view of (\ref{uni-P}) and (\ref{U-rep}), we recover
\begin{eqnarray}
e^{\frac{i}{\hbar}  {\rm Op}_\hbar^w (b(\hbar))    }  \circ   {\rm Op}_\hbar^w (H_1 \, \sharp \, \mathcal{X}_{ \le E + \epsilon  } )   \circ   e^{- \frac{i}{\hbar}  {\rm Op}_\hbar^w (b(\hbar))    } \psi =  
 {\rm Op}_\hbar^w (H_1 \, \sharp \, \mathcal{X}_{ \le E + \epsilon  } ) \,  \psi  .
\nonumber \\
\end{eqnarray}
The limit (\ref{Lim-sigma}) allows to recover, up to the remainder  $\delta(\hbar,\sigma)$, 
\begin{eqnarray}
e^{\frac{i}{\hbar}  {\rm Op}_\hbar^w (b(\sigma))    }  \circ   {\rm Op}_\hbar^w (H_1 \, \sharp \, \mathcal{X}_{ \le E + \epsilon  } )   \circ   e^{- \frac{i}{\hbar}  {\rm Op}_\hbar^w (b(\sigma))    } \psi 
&=&   {\rm Op}_\hbar^w (H_2 \, \sharp \, \mathcal{X}_{ \le E + \epsilon  } ) \,  \psi  
\nonumber
\\
& + & \delta(\hbar,\sigma)  .
\end{eqnarray}
In particular, 
\begin{eqnarray}
\lim_{ \hbar \to \sigma^- } \|  \delta(\hbar,\sigma)  \|_{L^2} =  0.
\end{eqnarray}
The application of Corollary \ref{Cor-aap} involves the  constant $\mathcal{K}[H_1, b(\sigma) ]$ (possibly diverging as $\sigma \to 0^+$), and $\varphi_\sigma$ corresponding to the Hamiltonian flow of $b(\sigma, z)$. This gives
\begin{eqnarray}
\|   H_1 \circ \varphi_\sigma -  H_2 \|_{C^0 (\Omega (E))}   \le  \mathcal{K}[H_1, b(\sigma) ] \, \hbar  
\end{eqnarray}
The  inequality we require is written for arbitrary $0 < u \le 1$ and 
\begin{eqnarray}
  \mathcal{K}[H_1, b(\sigma) ]   \, \hbar  \le u 
\end{eqnarray}
which is fullfilled for $\hbar \le u \cdot  \mathcal{K}[H_1, b(\sigma) ]^{-1}$. Notice that 
\begin{eqnarray}
 u \cdot  \mathcal{K}[H_1, b(\sigma) ]^{-1} \le \sigma 
\end{eqnarray}
is fulfilled for 
\begin{eqnarray}
u \le \sigma \cdot \mathcal{K}[H_1, b(\sigma) ]. 
\end{eqnarray} 
In particular, if $\sigma \to 0$ and $u \to 0$ then $\hbar \to 0$ and $\sigma - \hbar \le \sigma \to 0$. We conclude that for any $0 < \varepsilon \le 1$ there exists an interval $0 < \sigma \le \sigma_0 (E,\varepsilon)$ such that 
\begin{eqnarray}
\|   H_1 \circ \varphi_\sigma -  H_2 \|_{C^0 (\Omega (E))}   \le \varepsilon.
\end{eqnarray} 
\noindent
$\square$

\bigskip
\bigskip

\noindent
{\bf Proof of Theorem 2}. We apply the main result of Theorem \ref{TH1},  combined with (\ref{inv-sym22}) and (\ref{inf-sup34}). More precisely, 

\begin{equation}
\label{inf-sup345}
\overline{H}_2 (P)  =  \inf_{\Gamma \in \mathcal{G}_2(E)} \sup_{(x,p) \in \Gamma } H_2 (x,p+P), \quad \forall P \in  \mathcal{U}_{2,E}. 
\end{equation}
The application of 
\begin{eqnarray}
\|   H_1 \circ \varphi_\sigma -  H_2 \|_{C^0 (\Omega (E))}   \le \varepsilon, \quad 0 < \sigma \le \sigma_0 (E,\varepsilon)
\end{eqnarray} 
ensures that 
\begin{equation}
\overline{H}_2 (P)  =  \inf_{\Gamma \in \mathcal{G}_2(E)} \sup_{(x,p) \in \Gamma } H_1 \circ \varphi_\sigma (x,p+P), \quad \forall P \in  \mathcal{U}_{2,E}. 
\end{equation}
On the other hand,
\begin{equation}
\overline{H}_1 (P)  =  \inf_{\Gamma \in \mathcal{G}} \sup_{(x,p) \in \Gamma } H_1 \circ \varphi_\sigma (x,p+P), \quad \forall P \in \Bbb R^n, \quad \forall \varphi_\sigma,   
\end{equation}
As a consequence,
\begin{equation}
\overline{H}_1 (P)  = \overline{H}_2 (P) , \quad \forall P \in \mathcal{U}_{2,E} 
\end{equation}
and since $E$ can be fixed arbitrary large  then
\begin{equation}
\overline{H}_1 (P)  = \overline{H}_2 (P) , \quad \forall P \in \Bbb R^n.
\end{equation}
$\Box$

\bigskip
\bigskip

\noindent
{\small {\bf Acknowledgments}:  We are grateful to S. Graffi, A. Parmeggiani  and T. Paul for the  many useful discussions  on  the  Egorov Theorem and inverse spectral problem; we  thank F. Cardin and A. Sorrentino  for our enlightening  conversations on the most recent results of Hamilton-Jacobi equation and the inverse homogenization problem.}\\

\noindent
{\small {\bf Funding}: This work has been supported and financed by the Italian National Group of Mathematical Physics (INDAM-GNFM) within the research project 2016/2017: ``Periodic Schr\"odinger operators and weak KAM theory''.}

% For one-column wide figures use

%
% For two-column wide figures use

%\begin{acknowledgements}
%If you'd like to thank anyone, place your comments here
%and remove the percent signs.
%\end{acknowledgements}

% BibTeX users please use one of
%\bibliographystyle{spbasic}      % basic style, author-year citations
%\bibliographystyle{spmpsci}      % mathematics and physical sciences
%\bibliographystyle{spphys}       % APS-like style for physics
%\bibliography{}   % name your BibTeX data base

% Non-BibTeX users please use

\end{document}